\numberwithin{equation}{section}
\declaretheoremstyle[bodyfont=\it,qed=\qedsymbol]{noproofstyle}
\declaretheorem[numberlike=equation]{observation}
\declaretheorem[name=Observation,numbered=no]{observation*}
\declaretheorem[numberlike=equation]{theorem}
\declaretheorem[name=Theorem,numbered=no]{theorem*}
\declaretheorem[numberlike=equation]{lemma}
\declaretheorem[name=Lemma,numbered=no]{lemma*}
\declaretheorem[name=Corollary,numbered=no]{corollary*}
\declaretheorem[name=Proposition,numbered=no]{proposition*}
\declaretheorem[name=Claim,numbered=no]{claim*}
\declaretheorem[name=Conjecture,numbered=no]{conjecture*}
\declaretheorem[numberlike=equation]{question}
\declaretheorem[name=Question,numbered=no]{question*}
\declaretheorem[name=Open Problem]{openproblem}
\declaretheoremstyle[bodyfont=\it,qed=$\lozenge$]{defstyle} 
\declaretheorem[numberlike=equation,style=defstyle]{definition}
\declaretheorem[unnumbered,name=Definition,style=defstyle]{definition*}
\declaretheorem[unnumbered,name=Example,style=defstyle]{example*}
\declaretheorem[unnumbered,name=Notation=defstyle]{notation*}
\declaretheorem[numberlike=equation,style=defstyle]{construction}
\declaretheorem[unnumbered,name=Construction,style=defstyle]{construction*}
\declaretheorem[unnumbered,name=Remark,style=defstyle]{remark*}
\newcommand{\shortECCC}[2]{\texttt{\href{http://eccc.weizmann.ac.il/report/\ifnumcomp{#1}{>}{93}{19}{20}#1/#2/}{eccc:TR#1-#2}}}
\newcommand{\parseECCC}[1]{
\StrSubstitute{#1}{TR}{}[\tmpstring]%
\IfSubStr{\tmpstring}{/}{ 
\StrBefore{\tmpstring}{/}[\ecccyear]%
\StrBehind{\tmpstring}{/}[\ecccreport]%
}{
\StrBefore{\tmpstring}{-}[\ecccyear]%
\StrBehind{\tmpstring}{-}[\ecccreport]%
}%
\shortECCC{\ecccyear}{\ecccreport}}
\newif\ifnote
\newcommand{\BLnote}[1]{\textcolor{BrickRed}{\guillemotleft BLV: #1 \guillemotright}}
\newcommand{\MKnote}[1]{\textcolor{Purple}{\guillemotleft MK: #1 \guillemotright}}
\newcommand{\BLnote}[1]{}
\newcommand{\MKnote}[1]{}
\newcommand{\ehref}[1]{\href{mailto:#1}{#1}}
\renewcommand{\E}{\mathbb{E}}
\newcommand{\C}{\mathbb{C}}
\newcommand{\spars}[1]{\left\Vert #1 \right\Vert_{0}}
\newcommand{\ip}[1]{\left\langle #1 \right\rangle}
\newcommand{\calH}{\mathcal{H}}
\newcommand{\calC}{\mathcal{C}}
\let\epsilon\varepsilon
\title{Lower Bounds for Matrix Factorization}
\author{Mrinal Kumar\thanks{\ehref{mrinalkumar08@gmail.com}. Department of Computer Science, University of Toronto, Canada. A part of this work was done during the semester on Lower Bounds in Computational Complexity at Simons Institute for the Theory of Computing, Berkeley, USA.
}
\and%
{Ben Lee Volk\thanks{\ehref{benleevolk@gmail.com}. Center for the Mathematics of Information, California Institute of Technology, USA.}
}
}
\date{}
\begin{document}

\maketitle

\begin{abstract}
We study the problem of constructing explicit families of matrices which cannot be expressed  as a product of a few sparse matrices. In addition to being a natural mathematical question on its own, this problem appears in various incarnations in computer science; the most significant being in the context of lower bounds for algebraic circuits which compute linear transformations,  matrix rigidity and  data structure lower bounds.

We first show, for every constant $d$, a deterministic construction in subexponential time of a family $\{M_n\}$ of $n \times n$ matrices which cannot be expressed as a product $M_n = A_1 \cdots A_d$ where the total sparsity of $A_1,\ldots,A_d$ is less than $n^{1+1/(2d)}$. In other words, any depth-$d$ linear circuit computing the linear transformation $M_n\cdot \vecx$ has size at least $n^{1+\Omega(1/d)}$. This improves upon the prior best lower bounds for this problem, which are barely super-linear, and were obtained by a long line of research based on the study of  super-concentrators (albeit at the cost of a blow up in the time required to construct these matrices). 

We then outline an approach for proving improved lower bounds through a certain derandomization problem, and use this approach to prove asymptotically optimal quadratic lower bounds for natural special cases, which generalize many of the common matrix decompositions.

\end{abstract}
\newpage

\section{Introduction}
\label{sec:intro}
This work concerns the following (informally stated) very natural problem:

\begin{openproblem}
\label{openproblem:lb}
Exhibit an explicit matrix $A \in \F^{n \times n}$, such that $A$ cannot be written as $A=BC$, where $B \in \F^{n \times m}$ and $C \in \F^{m \times n}$ are sparse matrices.
\end{openproblem}

Before bothering ourselves with the precise meaning of the words ``explicit'' and ``sparse'' in the above problem, we discuss the various contexts in which this problem presents itself.

\subsection{Linear circuits and matrix factorization}
\label{sec:intro:lin-circuits}

Algebraic complexity theory studies the complexity of computing polynomials using arithmetic operations: addition, subtraction, multiplication and division. An algebraic circuit over a field $\F$ is an acyclic directed graph whose vertices of in-degree 0, also called inputs, are labeled by indetermeinates $\set{x_1, \ldots, x_n}$ or field element from $\F$, and every internal node is labeled with an arithmetic operation. The circuit computes rational functions in the natural way, and the polynomials (or rational functions) computed by the circuit are those computed by its vertices of out-degree 0, called the outputs. This framework is general enough to encompass virtually all the known algorithms for algebraic computational problems. The size of the circuit is defined to be the number of edges in it. For a more detailed background on algebraic circuits, see \cite{SY10}.

Perhaps the simplest non-trivial class of of polynomials is the class of linear (or affine) functions. Accordingly, such polynomials can be computed by a very simple class of circuits called \emph{linear circuits}: these are algebraic circuits which are only allowed to use addition and multiplication by a scalar. It is often convenient to consider graphs with labels on the edges as well: every internal node is an addition gate, and for $c \in \F$, an edged labeled $c$ from a vertex $v$ to a vertex $u$ denotes that the output of $v$ is multiplied by $c$ when feeding into $u$. Thus, every node computes a linear combination of its inputs.

It is not hard to show that any arithmetic circuit for computing a set of linear functions can be converted into a linear circuit with only a constant blow-up in size (see \cite{BCS97}, Theorem 13.1; eliminating division gates requires that the field $\F$ in question is large enough. In this paper we will always makes this assumption when needed). 

Clearly, every set of $n$ linear functions on $n$ variables (represented by a matrix $A \in \F^{n \times n}$) can be computed by a linear circuit of size $O(n^2)$. Using counting arguments (over finite fields) or dimension arguments (over infinite fields), it can be shown that for a random or generic matrix this upper bound is fairly tight. Thus, a central open problem in algebraic complexity theory is to prove any super-linear lower bound for an \emph{explicit} family of matrices $\set{A_n}$ where $A_n \in \F^{n \times n}$. The standard notion of explicitness in complexity theory is that there is a deterministic algorithm that outputs the matrix $A_n$ in $\poly(n)$ time, although more or less stringent definitions can be considered as well.

Despite decades of research and partial results, such lower bounds are not known.\footnote{We remark that super-linear lower bounds for general arithmetic circuits are known, but for polynomials of high degree \cite{Strassen73, BS83}.} In order to gain insight into the general model of computation, research has focused on limited models of linear circuits, such as monotone circuits, circuits with bounded coefficients, or bounded depth circuits. We defer a more thorough discussion on previous work to \autoref{sec:intro:prev}, and proceed to describe bounded depth circuits, which are the focus of this work.

The \emph{depth} of a circuit is the length (in edges) of a longest path from an input to an output. Constant depth circuits appear to be a particularly weak model of computation. However, even this model is surprisingly powerful (see also \autoref{sec:intro:rigidity}).

The ``easiest'' non-trivial model is the model of depth-2 linear circuits. A depth 2 linear circuit computing a linear transformation $A \in \F^{n \times n}$ consists of a bottom layer of $n$ input gates, a middle layer of $m$ gates, and a top layer of $n$ output gates. We assume, without loss of generality, that the circuit is \emph{layered}, in the sense that every edge goes either from the bottom to the middle layer, or from the middle to the top layer. Indeed, every edge going directly from the bottom to the top layer can be replaced by a path of length 2; this transformation increases the size of the circuit by at most a factor of 2.

By letting $C \in \F^{m \times n}$ be the adjacency matrix of the (labeled) subgraph between the bottom and the middle layer, and $B \in \F^{n \times m}$ be the adjacency matrix as the subgraph between the bottom and the top layer, it is clear that $A=BC$. Thus, a decomposition of $A$ into the product of two sparse matrices is equivalent to saying that $A$ has a small depth-2 linear circuit. This argument can be generalized, in exactly the same way, to depth-$d$ circuits and decompositions of the form $A=A_1 \cdots A_d$, for constant $d$.

Weak super-linear lower bounds are known for constant depth linear circuits. They are based on the following observation, due to Valiant \cite{Valiant75}: for subsets $S,T\subseteq [n]$ of size $k$, let $A_{S,T}$ denote the submatrix of $A$ indexed by rows in $S$ and columns in $T$. If $A_{S,T}$ has rank $k$, the minimal vertex cut in the subcircuit restricted to input from $S$ and outputs from $T$ is of size at least $k$: indeed, a smaller cut corresponds to a factorization $A_{S,T} = PQ$ for $P \in \F^{k \times r}$ and $Q \in \F^{r \times k}$ for $r < k$, contradicting the rank assumption. Using Menger's theorem, it is now possible to deduce  that if $A$ is a matrix such that for every $S,T$ as above the matrix $A_{S,T}$ is non-singular, then the circuit computing $A$ contains, for every subcircuit which corresponds to such $S,T$, at least $k$ vertex disjoint paths from $S$ to $T$. Such graphs were named  \emph{superconcentrators} by Valiant, and their minimal size was extensively studied \cite{Valiant75, Pippenger77, Pippenger82, DDPW83, Pudlak94, AP94, RTS00}.

Superconcentrators of logarithmic depth and linear size do exist, so while this approach cannot show lower bounds for circuits of logarithmic depth, it is possible to show that for constant $d$, any depth-$d$ superconcentrator has size at least $n \cdot \lambda_d(n)$, where $\lambda_d(n)$ is a function that unfortunately grows very slowly with $n$. For example, $\lambda_2(n) = \Theta(\log^2 n / \log \log n)$, $\lambda_3(n) = \Theta(\log \log n)$, $\lambda_4(n) = \lambda_5(n) = \log^*(n)$, and so on. Such lower bounds apply for any matrix whose minors of all orders are non-zero, e.g., a Cauchy matrix given by $A_{i,j} = 1/(x_i - y_j)$ for any distinct $x_1,\ldots,x_n,y_1,\ldots,y_n$. Over finite fields it is possible to to modify the proof and obtain a similar lower bounds for matrices defining good error correcting codes \cite{GHKPV13}.

These lower bounds on the size of superconcentrators are tight: for every $d \in \N$, there exists a super-concentrator of depth $d$ and size $O(n \cdot \lambda_d (n))$. It is thus impossible to improve the lower bounds only using this technique.

\subsection{Matrix rigidity}
\label{sec:intro:rigidity}

A demonstration of the surprising power of depth-2 circuits can be seen using the notion of \emph{matrix rigidity}, a pseudorandom property of matrices which we now recall. A matrix $A \in \F^{n \times n}$ is $(r,s)$ rigid if $A$ \emph{cannot} be written as a sum $A=R+S$ where $R$ is a matrix of rank $r$, and $S$ is a matrix with at most $s$ non-zero entries. Valiant \cite{Valiant77} famously proved that if $A$ is computed by a linear circuit with bounded fan-in of depth $O(\log n)$ and size $O(n)$, then $A$ is not $(\varepsilon n, n^{1+\delta})$ rigid for every $\varepsilon,\delta>0$.\footnote{In fact, one can obtain slightly better parameters. See, for example, \cite{Valiant77} or  \cite{DGW18}.} It follows that an explicit construction $(\varepsilon n, n^{1+\delta})$ matrix, for some $\varepsilon, \delta>0$, will imply a super-linear lower bound for linear circuits of depth $O(\log n)$. Pudl{\'{a}}k \cite{Pudlak94} observed that similar rigidity parameters will imply even stronger lower bounds for constant depth circuits.
A random matrix (over infinite fields) is $(r, (n-r)^2)$-rigid, but the best explicit constructions have rigidity $(r,n^2/r \cdot \log (n/r))$ \cite{F93, SSS97}, which is insufficient for proving lower bounds.

Observe that a decomposition $A=R+S$ where $\rank(R) = \varepsilon n$ and $S$ is $n^{1+\delta}$-sparse corresponds to a depth-$2$ circuit with a very special structure and with at most $2\varepsilon n^2 + n^{1+\delta}$ edges (this circuit is not layered, but as we explained above, this does not make a significant difference). In particular, one way of interpreting Valiant's result is as a non-trivial depth reduction from depth $O(\log n)$ to depth 2, so that proving \emph{any} depth-2 $\Omega(n^2)$ lower bound for an explicit matrix, will imply a lower bound for depth $O(\log n)$.\footnote{We note that this statement makes sense only over large fields, as over fixed finite fields, it is always possible to prove an \emph{upper bound} of $O(n^2 / \log n)$ on the depth-2 complexity of any matrix  \cite{JS13}. This does not contradict the fact that rigid matrices exist over finite fields --- a decomposition to $R+S$ is a very special type of depth-$2$ circuit.} This can be seen as the linear circuit analog of similar strong depth reduction theorems for general algebraic circuits \cite{AV08, K12b, T15, GKKS16}.

However, we would like to argue that proving lower bounds for depth-2 circuits is in fact \emph{necessary} for proving rigidity lower bounds, by observing that \emph{upper bounds} on the depth-2 complexity of $A$ give upper bounds on its rigidity parameters. Indeed, suppose $A=BC$ can be computed by a depth-2 circuit of size $n^{1+\varepsilon}$. Let $m$ be as before the number of columns of $B$ (which equals the number of rows of $C$), and note that we may assume $m \le n^{1+\varepsilon}$, as zero columns of $B$ or zero rows of $C$ can be omitted. For $i \in [m]$, let $B_i$ denote the $i$-th column of $B$, and $C_i$ the $i$-th row of $C$, so that $A = \sum_{i=1}^{m} B_i C_i$. Fix a constant $\delta > 0$, and say $i \in [m]$ is \emph{dense} if either $B_i$ or $C_i$ has more than $n^{\varepsilon}/\delta$ non-zero entries; otherwise, $i$ is \emph{sparse}. Since $B$ can have at most $\delta n$ columns with sparsity of more than $n^{\varepsilon}/\delta$, and similarly for the rows of $C$, the number of dense $i$-s is at most $2 \delta n$. It follows that
\[
A = \sum_{i\text{ dense}} B_i C_i + \sum_{i\text{ sparse}} B_i C_i.
\]
The first sum is a matrix of rank at most $2 \delta n$, and the second is a matrix whose sparsity is at most $m \cdot n^{2\varepsilon}/\delta^2 = n^{1+3\varepsilon}/\delta^2$. Thus, proving rigidity lower bounds of the type required to carry out Valiant's approach necessarily means proving lower bounds of the form ``$n^{1+\varepsilon}$'' on the depth-2 complexity of $A$ (we remark that the argument above is very similar to the aforementioned result of Pudl{\'{a}}k \cite{Pudlak94}; Pudl{\'{a}}k's argument is stated in a slightly different language and in greater generality). Since proving rigidity lower bounds is a long-standing open problem, we view the problem of proving an $\Omega(n^{1+\varepsilon})$ lower bound for depth-2 circuits as an important milestone towards this.

\subsection{Data structure lower bounds}
\label{sec:intro:ds}
The problem of matrix factorization into sparse matrices also appears  in the context of proving lower bounds for data structures. A dynamic data structure with $n$ inputs and $q$ queries is a pair of algorithms whose purpose is to update and retrieve certain data under a sequence of operations, while minimizing the memory access. In the group model, it is given by a pair of algorithms. The update algorithm is represented by a matrix $U \in \F^{s \times n}$. Given $x \in \F^{n}$, thought of as assignment of weights to the $n$ inputs, $Ux$ computes a linear combination of those weights and stores them in memory. The query algorithm is given by a matrix $Q \in \F^{q \times s}$. Given a query, it computes a linear function of the $s$ memory cells, and returns the answer. Hence, an ``update'' operation followed by a ``retrieve'' operation computes the linear transformation given by $A=QU$.

The worst case update time of the database is the maximal number of non-zero elements in a column of $U$, and the worst case query time is the maximal number of non-zero elements in a row of $Q$. The value $s$ denotes the space required by the data structure. It now directly follows that a matrix $A \in \F^{q \times n}$ which cannot be factored as $A =QU$ for a row-sparse $Q$ and column-sparse $U$ gives a data structure problem with a lower bound on its worst case query or update time. It is also possible to define an analogous average case notion. Lower bounds for this model were proved by \cite{Fredman82, FredmanSaks89, PD06, Patrascu07, Larsen12, Larsen14, LWY18}, but none of these results beats the lower bounds for depth-2 circuits obtained using superconcentrators. 

A related model is that of a static data structures, which is again given by a factorization $A=QP$, where now we are interested in trade-offs between the space $s$ of the data structure and its worst case query time, while not being charged for the total sparsity of $P$. A recent work of Dvir, Golovnev and Weinstein \cite{DGW18} showed that proving lower bounds for this model is related to the problem of matrix rigidity from \autoref{sec:intro:rigidity}.

Despite the overall similarity, there are several key technical differences between the linear circuit complexity and the data structure problems. The first and obvious issue is that worst-case lower bounds on the update or query time do not necessarily imply that $Q$ or $U$ are dense matrices: the total sparsity of $Q$ and $U$ is related to the average-case update and query time. The second, more severe issue, is that in many applications the number of queries $q$ is polynomially larger than $n$, while the lower bounds on running time are still measured as functions of the number of inputs $n$. This makes sense in the data structure settings, but from a circuit complexity point of view, a set of say $n^3$ linear functions trivially requires a circuit of size $n^3$, and thus a lower bound of say $n \polylog(n)$ is meaningless in that setting.

This issue also comes up when studying the so-called \emph{succinct space} setting, where we require $s=n(1+o(1))$. The lower bounds we are aware of for this setting are worst case lower bounds, and require the number of outputs $q$ to be at least $Cn$ for some $C>1$ \cite{GM07,DGW18}, so that in the corresponding circuit the number of vertices in the middle layer is required to be much smaller than the number of outputs, which may be considered quite unnatural. In particular, we are unaware of any improved lower bounds on the sparsity of matrix factorization for $A \in \F^{n \times n}$ when $s=n(1+o(1))$ or even $s=n$ which come from the data structure lower bounds literature.

\subsection{Machine learning}
\label{sec:intro:ml}
We briefly remark that the problem of factorizing a matrix into a product of two or more sparse matrices is also ubiquitous in machine learning and related areas. Naturally, research in those areas did not focus on lower bounds but rather on algorithms for finding such a representation, assuming it exists, sometimes heuristically, and it is usually enough to approximate the target matrix $A$. In particular, algorithms have been proposed for the very related problems of non-negative matrix factorization \cite{LS00}\footnote{It is interesting to observe that for the problem of factorizing matrices into non-negative matrices it is quite easy to prove almost-optimal lower bounds even for unbounded depth linear circuits, as mentioned in \autoref{sec:intro:prev}} or sparse dictionary learning \cite{MBPS09}, and there are also connections to the analysis of deep neural networks \cite{NP13}.

\subsection{Previous work}
\label{sec:intro:prev}

As mentioned in \autoref{sec:intro:lin-circuits}, there are no non-trivial known lower bounds for general linear circuits, and for bounded depth circuits, the best lower bounds follow from the lower bounds on bounded depth super-concentrators, which are barely super-linear. 

Shoup and Smolensky \cite{SS96} give a lower bound of $\Omega(dn^{1+1/d})$ for depth-$d$ circuits computing a certain linear transformation given by a matrix $A \in \R^{n \times n}$. Unfortunately, the matrices for which their lower bound holds are not explicit from the complexity theoretic point of view, despite having a very  succinct mathematical description (for example, one can take $A_{i,j} = \sqrt{p_{i,j}}$ for $n^2$ distinct prime numbers $p_{i,j}$). For the same matrix, they in fact prove super-linear lower bounds for circuits of depth up to $\polylog(n)$.

Quite informally, the intuition behind their lower bounds is that all small bounded depth linear circuits can be described as lying in the image of a low-degree polynomial map in a small number of variables, and thus, if the elements of $A$ are sufficiently ``algebraically rich'', for a certain specific measure, $A$ cannot be computed by such a circuit. This same philosophy lies behind Raz's elusive function approach for proving lower bounds for algebraic circuits \cite{Raz10a}. In particular, among other results, Raz uses an argument which can be seen as a modification of the technique of Shoup and Smolensky (as worked out in \cite{SY10}) to prove lower bounds for bounded depth algebraic circuits computing bounded degree polynomials.

One class of linear circuits which has attracted significant attention is the class of circuits with bounded coefficients. Here, the circuit is only allowed to multiply by scalars with absolute value of at most some constant. For definiteness, we may assume this constant is 1 (this does not affect the complexity by more than a constant factor). The earliest result for this model is Morgenstern's ingenious proof \cite{Morgenstern73} of an $\Omega(n \log n)$ lower bound on bounded coefficient circuits computing the discrete Fourier transform matrix (this lower bound is matched by the upper bound given by the Cooley-Tukey FFT algorithm, which is a bounded coefficient linear circuit). For depth-$d$ circuits, Pudl{\'{a}}k \cite{Pudlak00} has proved lower bounds of the form $\Omega(d n^{1+1/d})$ for the same matrix.

Another natural subclass which was considered in earlier works is the class of monotone linear circuits. These are circuits which are defined over $\R$, and can only use non-negative scalars. Chazelle \cite{Chazelle2001} observed that it is possible to prove lower bounds in this model, even against unbounded-depth circuits, for any boolean matrix with no large monochromatic rectangle. Instantiated with the recent explicit constructions of bipartite Ramsey graphs \cite{CZ16, BDT17, Cohen17, Li18}, this gives an almost optimal $n^{2-o(1)}$ lower bound against such circuits. The main observation in the proof is that if $A$ does not have monochromatic $t \times t$ rectangle, then since the model is monotone and no cancellations are allowed, every internal node which computes a linear function supported on at least $t$ variables cannot be connected to more than $t$ output gates.

For a more detailed survey on these results and some other related results, see the survey by Lokam \cite{Lokam09}.

\subsection{Our results}
\label{sec:intro:results}
In this paper, we prove several results regarding bounded depth linear circuits which we now discuss.
\paragraph{Lower bounds for depth-$d$ linear circuits. }We start by considering general depth-$d$ circuits. We construct, in subexponential time, matrices which require depth-$d$ circuits of size $n^{1+\Omega(1/d)}$.

\begin{theorem}
\label{thm:intro-depth-d}
Let $\F$ be a field. There exists a family of matrices $\set{A_n}_{n \in \N}$, which can be constructed in time $\exp(n^{1-\Omega(1/d)})$, such that every depth-$d$ linear circuit computing $A_n$, even over the algebraic closure of $\F$, has size at least $n^{1+\Omega(1/d)}$.

If $\F=\Q$, the entries of $A$ are integers of bit complexity $\exp(n^{1-\Omega(1/d)})$. If $\F=\F_q$ is a finite field, the entries of $A$ are elements of an extension $\E$ of $\F$ of degree $\exp(n^{1-\Omega(1/d)})$.
\end{theorem}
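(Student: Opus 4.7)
The plan is to follow the dimension-counting philosophy of Shoup--Smolensky~\cite{SS96}, combined with an explicit construction obtained by specializing to an algebraically rich one-parameter family of matrices.

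First, I would characterize matrices of small depth-$d$ complexity as lying in a low-dimensional algebraic variety. A depth-$d$ linear circuit of size $s$ computing $A$ gives a factorization $A = A_1 \cdots A_d$ with $\sum_i \|A_i\|_0 \le s$. Fix a \emph{wiring type} $T$ specifying the layer sizes $m_0 = m_d = n$, $m_1, \ldots, m_{d-1} \le s$, and the sparsity patterns $S_i \subseteq [m_{i-1}] \times [m_i]$ with $\sum_i |S_i| \le s$. The matrices of type $T$ form the image of a polynomial map $\Phi_T \colon \F^s \to \F^{n^2}$ whose coordinate functions have total degree $d$ (each matrix entry is a sum of monomials coming from source-to-sink paths through the wiring diagram). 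Hence the Zariski closure $V_T$ of $\mathrm{Im}(\Phi_T)$ is a subvariety of $\F^{n^2}$ of dimension at most $s$ and degree at most $d^s$. The total number of types is $N \le (ns)^{O(s)}$, so the ``bad set'' $\mathcal B = \bigcup_T V_T$ of matrices with small depth-$d$ complexity is contained in the zero set of a nonzero polynomial $P \in \F[X_{ij}]$ of degree $D \le N \cdot d^s = \exp(O(s \log n))$. Setting $s = c n^{1+1/(2d)}$ with $c$ a sufficiently small constant ensures that $\mathcal B$ is a proper subvariety of $\F^{n^2}$, and any $A$ with $P(A) \ne 0$ realizes the claimed lower bound.

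Second, I would produce an explicit such $A$ by specializing onto a parametric one-parameter family $\gamma(t) \in \F^{n^2}$, for instance $\gamma(t)_{ij} = t^{e_{ij}}$ for a carefully chosen sequence of integer exponents $e_{ij}$ (such as the first $n^2$ primes, selected so that distinct multilinear monomials remain distinct after substitution, in the spirit of Shoup--Smolensky's use of $\sqrt{p_{ij}}$). The image of $\gamma$ is a curve of degree $O(\max e_{ij})$; provided it is not contained in any single $V_T$, we get $|\gamma^{-1}(V_T)| \le \deg(V_T) \cdot \deg(\gamma) \le d^s \cdot \max e_{ij}$, and summing over all types the total set of bad values of $t$ has size at most $N \cdot d^s \cdot \max e_{ij} = \exp(O(s \log n))$. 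Choosing any $t$ outside this set (a positive integer for $\F = \Q$, or an element of a sufficiently large algebraic extension of $\F_q$ in the finite field case) yields the required matrix, whose entries have bit complexity at most $\max e_{ij} \cdot \log t = \exp(O(s \log n))$, matching the stated bound after balancing constants.

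The main obstacle is twofold. On the algebraic side, one must rigorously show that the parametric family $\gamma$ is not contained in any $V_T$: this reduces to proving that for every type $T$ and every nonzero polynomial $P_T$ vanishing on $V_T$, the pullback $P_T \circ \gamma$ is a nonzero univariate polynomial in $t$, which in turn follows from a multiplicative-independence / Galois-theoretic argument whose exact form depends on the choice of the exponents $e_{ij}$. On the quantitative side, naively enumerating all $N = (ns)^{O(s)}$ wiring types in order to verify a candidate $t$ already costs $\exp(\Omega(n^{1+1/(2d)} \log n))$, which is \emph{larger} than the claimed $\exp(n^{1-\Omega(1/d)})$ time bound. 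Closing this gap---either via an implicit polynomial-identity-testing procedure that avoids enumerating types, or by a more careful counting that exploits the structural redundancy of the parametrization $\Phi_T$---is the main technical heart of the proof and the step I expect to be hardest.
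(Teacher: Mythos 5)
Your proposal shares the Shoup--Smolensky philosophy of the paper (small circuits are algebraically ``poor'', so an algebraically rich matrix is hard), but the concrete route you take --- covering the set of easy matrices by varieties $V_T$ of dimension $\le s$ and degree $\le d^s$, one per wiring type, and then finding a point on an explicit curve $\gamma(t)_{ij}=t^{e_{ij}}$ outside $\bigcup_T V_T$ --- has a gap that you yourself correctly identify and that is fatal to the theorem as stated: the construction time. Any procedure that certifies a candidate $t$ against all $(ns)^{O(s)}$ types, or that enumerates the $\exp(O(s\log n))$ bad values of $t$, costs $\exp(\Omega(n^{1+1/(2d)}\log n))$ once $s=\Theta(n^{1+1/(2d)})$, which is \emph{larger} than the claimed $\exp(n^{1-\Omega(1/d)})$ bound; since subexponential-time explicitness is the entire content of the theorem (an $\exp(n^{2})$-time construction is trivial by brute force), this is not a technicality but the crux. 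The paper escapes this by never working at ``scale $s$'': it uses the order-$t$ Shoup--Smolensky dimension $\Gamma_{t,\F}$ with $t=n^{1-1/(2d)}\ll s$. \autoref{lem:ss-easy-ub-large-depth} shows that any $A=P_1\cdots P_d$ with total sparsity $s$ has $\Gamma_{t,\F}(A)\le\binom{s+dt}{dt}\le(e^d(2s/dt)^d)^t$ (each entry of $A$ is a degree-$d$ polynomial in the $\le s$ entries of the factors, so $t$-wise products live in the span of degree-$dt$ monomials), while a matrix $y^{e_{ij}}$ built from a $t$-wise Sidon set has $\Gamma_{t,\F}\ge\binom{n^2}{t}$; comparing the two forces $s=\Omega(n^{1+1/(2d)})$. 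Crucially, the hardness certificate now depends only on $t$, so the Sidon set (and hence the matrix, after substituting $y$ by a root of an irreducible polynomial of degree $n^{O(t)}$ over $\F_q$, or by $2$ over $\Q$ with the $\Sigma_t$ variant) is constructible in time $n^{O(t)}=\exp(O(n^{1-1/(2d)}\log n))$, which is what the theorem claims.

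A secondary issue is your first obstacle: even granting unlimited time, you would still need to exhibit, for each type $T$, a nonzero polynomial vanishing on $V_T$ whose pullback along $\gamma$ is nonzero; ``multiplicative independence of the $e_{ij}$'' alone does not obviously do this, because you have no explicit handle on the ideal of $V_T$. The Shoup--Smolensky measure supplies exactly such a handle implicitly: the condition $\Gamma_{t,\F}(A)\le K$ is a rank condition on a matrix of $t$-wise products, and the Sidon-set construction (\autoref{lem:easy-sidon}) makes all $\binom{n^2}{t}$ products linearly independent, violating it. So to repair your argument you should replace ``avoid the whole bad variety'' with a complexity measure computable at order $t\approx n^{1-1/(2d)}$, at which point you essentially recover the paper's proof.
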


This theorem is proved in \autoref{sec:shoup-smol based lb}. We remark again that the best lower bounds against general depth-$d$ linear circuits for matrices that can be constructed in polynomial time are barely super-linear and much weaker than $n^{1+\varepsilon}$. In the recent work of Dvir, Golovnev and Weinstein \cite{DGW18} it was pointed out that currently there are not even known constructions of rigid matrices (with parameters that would imply lower bounds) in classes such as $\mathbf{E}^{\mathbf{NP}}$. By arguing directly about circuit size, and not about rigidity, \autoref{thm:intro-depth-d} gives constructions of matrices in a much smaller complexity class, which have the same bounded-depth complexity lower bounds as would follow from optimal constructions of rigid matrices using the results of Pudl{\'{a}}k \cite{Pudlak94}.

While the statement in \autoref{thm:intro-depth-d} holds for any $d \ge 2$, for $d=2$ there is a much simpler construction of a hard family of matrices in quasi-polynomial time. 
\begin{theorem}\label{thm:intro-depth-2-quasipoly}
Let $\F$ be any field and $c$ be any positive constant. Then, there is a family $\{A_n\}_{n \in \N}$ of $n \times n$ matrices which can be constructed in time $\exp(O(\log^{2c + 1} n))$ such that any depth-$2$ linear circuit computing $A_n$ even over the algebraic closure of $\F$ has size at least $\Omega(n\log^c n)$. 
\end{theorem}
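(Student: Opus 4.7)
The plan is to adapt the Shoup--Smolensky-style dimension argument that underlies \autoref{thm:intro-depth-d}, specialized to $d=2$ with a much coarser target so a smaller search space suffices. First I would record the standard parametrization: for each pair of bipartite graphs $G=(G_1,G_2)$ with total at most $s := \Theta(n\log^c n)$ edges, the matrices expressible as $A=BC$ with sparsity pattern dictated by $G$ are the image of a bilinear polynomial map $\phi_G\colon\overline{\F}^s\to\overline{\F}^{n\times n}$, and therefore lie in an affine variety $V_G$ of dimension at most $s$. The set $\mathcal{M}_s$ of matrices of depth-$2$ complexity at most $s$ is the union $\bigcup_G V_G$, and the goal becomes exhibiting an explicit $A_n\notin\mathcal{M}_s$ by an algorithm running in time $\exp(O(\log^{2c+1} n))$.

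Next I would construct the candidate $A_n$ from a single algebraic element $\alpha\in\overline{\F}$ of degree $D=\exp(O(\log^{2c+1} n))$ over $\F$, for instance a root of an irreducible polynomial of that degree produced and certified inside the time budget by a standard deterministic test. A natural concrete choice is to take the entries of $A_n$ to be distinct small-degree polynomial expressions in $\alpha$, such as $A_{i,j}=\alpha^{\pi(i,j)}$ for a carefully chosen injection $\pi\colon[n]\times[n]\to\{0,1,\ldots,D-1\}$. Writing any putative identity $A=BC$ in the basis $1,\alpha,\ldots,\alpha^{D-1}$ turns membership $A\in V_G$ into a system of polynomial equations over $\F$ in the $\le s$ circuit weights; a Bezout- or dimension-counting comparison against $\dim V_G\le s$ should then show that a short arithmetic progression of such candidates, say along a generic line in the parameter $\alpha$ or $\pi$, must produce at least one point outside $\bigcup_G V_G$.

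Verification that a specific $A_n$ from this family is outside $\mathcal{M}_s$ must avoid enumerating the $n^{\Omega(s)}$ possible graph structures; instead I would package non-membership as the non-vanishing of a single algebraic invariant of $A_n$ whose degree is bounded by the total ``algebraic complexity'' of $\mathcal{M}_s$, and test it directly on the succinct algebraic representation of $A_n$ in $\F[\alpha]$. The step I expect to be the main obstacle is precisely this: shaving the enumeration over $G$, which naively costs $n^{\Theta(n\log^c n)}$ and obliterates the quasi-polynomial budget, down to a quasi-polynomial global test. The saving must come from exploiting structural features of depth-$2$ that are absent at higher depths --- the bilinearity of $BC$ and the uniform dimension bound $\dim V_G\le s$ across all $G$ --- combined with the algebraic parametrization of the candidate family, rather than from any improvement in the per-$G$ analysis.
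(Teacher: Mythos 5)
Your setup is in the right spirit---the paper's first proof of this theorem is indeed a Shoup--Smolensky-type argument with $A_{i,j}=\alpha^{e_{i,j}}$ (or $2^{e_{i,j}}$ over $\Q$)---but the proposal has a genuine gap exactly where you flag it, and it is not a minor verification issue: without a uniform, enumeration-free hardness certificate there is no proof. The missing idea is to replace ``membership in $\bigcup_G V_G$'' by a concrete linear-algebraic complexity measure: the dimension over $\F$ of the span of all $t$-wise products of distinct entries of $A$ (the Shoup--Smolensky dimension $\Gamma_{t,\F}$, \autoref{def:SS-dim}). If $A=BC$ with $\spars{B}+\spars{C}\le s$, every entry of $A$ is bilinear in the $\le s$ nonzero entries of $B,C$, so every $t$-wise product is an integer combination of monomials of degree $2t$ in $s$ variables; hence $\Gamma_{t,\F}(A)\le\binom{s+2t}{2t}$ \emph{for every sparsity pattern $G$ simultaneously} (\autoref{lem:ss-easy-ub-large-depth}). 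This is what kills the $n^{\Theta(s)}$ enumeration. On the construction side, your ``carefully chosen injection $\pi$'' must be much more than injective: the exponents must form a $t$-wise Sidon set, so that all $\binom{n^2}{t}$ products are \emph{distinct} powers of $\alpha$ and therefore linearly independent over $\F$, giving $\Gamma_{t,\F}(A_n)\ge(n^2/t)^t$ (\autoref{lem:easy-sidon}, \autoref{lem:univariate hard properties}). Choosing $t=\Theta(\log^{2c}n)$ makes $\binom{s+2t}{2t}\approx(es/t)^{2t}<(n^2/t)^t$ whenever $s=o(n\sqrt{t})=o(n\log^c n)$, and the Sidon set, the irreducible polynomial of degree $n^{O(t)}=\exp(O(\log^{2c+1}n))$, and hence $A_n$ are all constructible within the stated time budget. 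Your Bezout/generic-line step, by contrast, only shows a hard matrix \emph{exists} in your candidate family; identifying one explicitly is the whole problem, and the Sidon property is what does it.

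Two further remarks. First, your dimension bound $\dim V_G\le s$ is too crude to run even non-constructively at this parameter range without the product structure: the paper's bound $\binom{s+2t}{2t}$ is a bound on the span of \emph{$t$-wise products}, not on the variety of matrices, and it is the interplay between $t$ and $s$ that yields $n^{1+o(1)}$-type bounds from a measure whose trivial maximum is only $\binom{n^2}{t}$. Second, the paper also gives a considerably more elementary second proof: take a $k\times k$ matrix $M_k$ with $k=\Theta(\log^c n)$ and doubly exponential entries, which requires depth-2 size $\Omega(k^2)$ (\autoref{obs:trivial hard matrix}), and set $A_n=\mathbf{I}_{n/k}\otimes M_k$; a direct-sum lemma for depth-2 circuits (\autoref{lem:block diagonal hard matrix}) then gives the $\Omega((n/k)\cdot k^2)=\Omega(n\log^c n)$ bound, with construction time $2^{O(k^2)}$. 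You may find that route easier to complete rigorously.
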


For every constant $c \geq 2$, this theorem already improves upon the current best lower bound of $\Omega(n\log^2 n/\log\log n)$ known for this problem (see~\cite{RTS00}). This construction is based on an exponential time construction of a small hard matrix, and then amplifying its hardness using a direct sum construction (note, however, that over infinite fields even the fact that a hard matrix can be constructed in exponential time, while not very hard to prove, is not \emph{completely} obvious).
For completeness, we describe this simple construction in~\autoref{subsec:depth-2-direct-sum}.

\paragraph*{Lower bounds for restricted depth-$2$ linear circuits. }

Given the importance of the model of depth-2 linear circuits, as explained above, and its resistance to strong lower bounds, we then move on to consider several natural subclasses of depth-2 circuits. These classes in particular correspond to almost all common matrix decompositions. We are able to prove asymptotically optimal $\Omega(n^2)$ lower bounds for these restricted models. As mentioned above, such lower bounds for general depth-2 circuits will imply super-linear lower bounds for logarithmic depth linear circuits, thus resolving a major open problem.

\paragraph{Symmetric circuits. }
A symmetric depth-2 circuit (over $\R$) is a circuit of the form $B^T B$ for some $B \in \R^{m \times n}$ (considered as a graph, the subgraph between the middle and the top layer is the ``mirror image'' of the subgraph between the bottom and middle layer). Over $\C$, one should take the conjugate transpose $B^*$ instead of $B^T$. 

Symmetric circuits are a natural computational model for computing positive semi-definite (PSD) matrix. Clearly, every symmetric circuit computes a PSD matrix, and every PSD matrix has a (non-unique) symmetric circuit. In particular, a Cholesky decomposition of PSD matrices corresponds to a computation by a symmetric circuit (of a very special form).

We prove asymptotically optimal lower bounds for this model.

\begin{theorem}
\label{thm:intro:PSD}
There exists an explicit family of real $n \times n$ PSD matrices $\set{A_n}_{n \in \N}$ such that every symmetric circuit computing $A_n$ (over $\R$ or $\C$) has size $\Omega(n^2)$.
\end{theorem}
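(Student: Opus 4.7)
The plan is a dimension-counting argument in the spirit of Shoup--Smolensky~\cite{SS96}. A symmetric circuit computing $A$ is exactly a factorization $A = B^{T}B$ with $B \in \R^{m \times n}$, and its size equals $\|B\|_{0}$. The goal is therefore to exhibit an explicit PSD matrix $A$ such that every such $B$ must satisfy $\|B\|_{0} = \Omega(n^{2})$.

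First I would upper bound the set of PSD matrices admitting a sparse symmetric factorization. Fix a sparsity budget $s$ (and, without loss of generality, $m \le s$). For every support pattern $P \subseteq [m]\times[n]$ of size at most $s$, the map $\phi_{P} \colon \R^{|P|} \to \R^{\binom{n+1}{2}}$ sending the free entries of $B$ supported on $P$ to $B^{T}B$ is a polynomial map of degree $2$, so its image has dimension at most $|P| \le s$. Taking the (finite) union of these images over all patterns $P$ of size $\le s$ still gives an algebraic subset of the $\binom{n+1}{2}$-dimensional space $\text{Sym}_{n}(\R)$ whose dimension is at most $s$. In particular, once $s < \binom{n}{2}$, the intersection of this union with the affine slice $\{A : A_{ii} = 1 \text{ for all } i\} \cong \R^{\binom{n}{2}}$ is a proper $\Q$-definable subvariety of that slice; call it $V_{s}$.

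The next step is to exhibit an explicit PSD matrix whose restriction to this slice lies outside $V_{s}$. A natural candidate is $A_{ii} = 1$ and $A_{ij} = M_{ij}$ for $i < j$, where the $M_{ij}$ are $\binom{n}{2}$ values of high joint algebraic complexity over $\Q$ (for instance algebraically independent transcendentals, or a concrete tower of radicals of sufficient degree to evade any low-complexity polynomial relation), scaled so that $\max_{ij}|M_{ij}| \le 1/(2n)$ to guarantee diagonal dominance and hence PSD-ness of $A$. If $A$ were in $V_{s}$, any non-zero polynomial with rational coefficients cutting out $V_{s}$ would produce a non-trivial $\Q$-algebraic relation among the $M_{ij}$, contradicting their construction. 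Putting things together forces $\|B\|_{0} \ge \binom{n}{2} = \Omega(n^{2})$.

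The hard part is making this argument \emph{effective}: the dimension count only guarantees the existence of defining polynomials for $V_{s}$, so to certify that concrete algebraic entries evade them one needs a bound on their degree (or on the algebraic complexity of the ideal of $V_{s}$). This can be obtained through B\'ezout-style degree bounds on the individual images $\operatorname{Im}(\phi_{P})$, combined with the (at most $(en)^{s}$) count of patterns. A secondary subtlety is that the PSD cone sits inside $\text{Sym}_{n}(\R)$ as a real semi-algebraic object, so one must verify that the relevant dimension count is genuinely tight in the real-algebraic sense; this is standard but not free, and similar issues need a brief separate treatment in the $\C$-version (where $B^{T}B$ is replaced by $B^{*}B$ and the natural underlying variety is over $\R$ rather than $\C$).
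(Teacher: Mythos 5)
Your proposal takes a genuinely different route from the paper, and unfortunately it has a gap that undermines precisely the feature the theorem is asserting: explicitness.

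The paper's proof is a hitting-set/PIT argument, not a dimension count. It takes the Vandermonde vectors $\vecv_i = (1,i,i^2,\ldots,i^{n-1})$ and observes, via Descartes' rule of signs, that any nonzero $s$-sparse real vector fails to be orthogonal to at least one of $\vecv_1,\ldots,\vecv_s$. It then constructs, by solving an explicit linear system, a matrix $\tilde M$ with $\tilde M \vecv_i = 0$ for $i\le n/2$ and $\tilde M\vecv_i = \vece_i$ for $i> n/2$, and sets $M=\tilde M^T\tilde M$, a rank-$n/2$ PSD matrix with rational entries computable in polynomial time. If $M = B^T B$ with $\spars{B} < n^2/4$, then since $\rank B \ge n/2$ there are $\ge n/2$ nonzero rows, hence some nonzero row of sparsity $\le n/2$; the rule of signs produces $i\le n/2$ with $B\vecv_i\neq 0$, so $\vecv_i^T M \vecv_i = \lVert B\vecv_i\rVert_2^2 \neq 0$, contradicting $\tilde M\vecv_i = 0$. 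Everything here is constructive with $\poly(n)$-bit rational entries.

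Your dimension-counting scheme, by contrast, needs the off-diagonal entries $M_{ij}$ to have "high joint algebraic complexity over $\Q$" — algebraically independent transcendentals, or a tower of radicals tuned to evade the defining polynomials of $V_s$. Neither of these yields an explicit matrix in the complexity-theoretic sense required by the theorem: transcendentals have no finite bit representation, and the effective-algebraic variant (once you push through the B\'ezout-style degree bounds you mention) produces algebraic numbers whose minimal polynomials have degree that grows like $n^{\Theta(s)}$, i.e.\ exponential bit-complexity. That is essentially the situation of the paper's Shoup--Smolensky-based results in \autoref{sec:shoup-smol based lb}, which the paper explicitly presents only as \emph{subexponential}-time constructions (\autoref{thm:intro-depth-d}), not polynomial-time ones. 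So your approach, made fully rigorous, would at best recover a subexponential-time construction of a hard PSD matrix — a weaker statement than \autoref{thm:intro:PSD}. You flagged "the hard part is making this argument effective," and indeed that hard part is the content of the theorem; as written the proposal does not close that gap, and the paper's route (replace the dimension count with an explicit dual-of-MDS/Descartes hitting set and a linear-algebra construction of $\tilde M$) is how it is closed.

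A smaller issue: your argument never really uses the symmetric structure $B^T B$ beyond passing to $\mathrm{Sym}_n(\R)$, so it would prove the same lower bound for general depth-2 circuits given matching explicitness — which would be a breakthrough far beyond this theorem. That should have been a warning sign. The paper's proof uses symmetry essentially: $\vecv_i^T(B^T B)\vecv_i = \lVert B\vecv_i\rVert^2$ is a sum of squares, so nonvanishing of $B\vecv_i$ forces nonvanishing of a single diagonal quadratic form, which is exactly what makes a size-$O(n)$ rank-one hitting set suffice here while the analogous question for general $BC$ (\autoref{openproblem:hit-sparse}) remains open.
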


We do not know whether every depth-2 linear circuit for a PSD matrix can be converted to a symmetric circuit with a small blow-up in size. One way to phrase this question is given below.

\begin{question}
\label{ques:intro:PSD}
Is there a constant $c<2$, such that every PSD matrix $A \in \R^{n \times n}$ which can be computed by a linear circuit of size $s$, can be computed by a symmetric circuit of size $O(s^c)$? 
\end{question}

A positive answer for \autoref{ques:intro:PSD} will imply, using \autoref{thm:intro:PSD}, an $\Omega(n^{1+\varepsilon})$ lower bound for depth-2 linear circuits.

\paragraph{Invertible circuits. }
Invertible circuits are circuits of the form $BC$, where either $B$ or $C$ are invertible (but not necessarily both). We stress that invertible circuits can (and do) compute non-invertible matrices. In particular, if $B \in \F^{n \times m}$ and $C \in \F^{m \times n}$, here we require $m=n$.

Invertible circuits generalize many of the common matrix decompositions, such as QR decomposition, eigendecomposition, singular value decomposition\footnote{A diagonal matrix can be multiplied with the matrix to its left or to its right, without increasing the sparsity, to obtain an invertible depth-$2$ circuit.} and LUP decomposition (in the case where the matrix $L$ is required to be unit lower triangular).\footnote{The sparsity of $UP$ equals the sparsity of $U$, as $P$ simply permutes the columns of $U$, so every $LUP$ decomposition corresponds to the invertible depth-$2$ circuit given by $L(UP)$.}

We prove optimal lower bounds for invertible circuits.

\begin{theorem}
\label{thm:intro:invertible}
Let $\F$ be a large enough field. There exists an explicit family of $n \times n$ matrices $\set{A_n}_{n \in \N}$ over $\F$ such that every invertible circuit computing $A_n$ has size $\Omega(n^2)$.
\end{theorem}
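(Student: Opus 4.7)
The plan is a dimension-counting argument combined with an algebraic-independence-based choice of an explicit hard matrix. Let $V_s \subseteq \F^{n \times n}$ denote the set of matrices $A$ admitting an invertible-circuit decomposition $A = BC$ of total sparsity at most $s$, meaning that $B$ is invertible and $\spars{B} + \spars{C} \leq s$. I first claim that the Zariski closure of $V_s$ has dimension at most $s$: for each pair of support patterns $S_B, S_C \subseteq [n] \times [n]$ with $|S_B| + |S_C| \leq s$, the matrices $A = BC$ such that the nonzero entries of $B$ lie in $S_B$ and the nonzero entries of $C$ lie in $S_C$ form the image of a bilinear map $\phi_{S_B, S_C} \colon \F^{|S_B|} \times \F^{|S_C|} \to \F^{n^2}$, and a polynomial map cannot increase dimension, so the image has dimension at most $|S_B| + |S_C| \leq s$. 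The invertibility constraint on $B$ merely restricts the domain to a Zariski-open subset and does not affect the bound. Taking the union over the (at most) $(n^2+1)^{2s}$ support patterns shows that $\dim \overline{V_s} \leq s$.

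For explicitness, I take $\F = K(x_{11}, \ldots, x_{nn})$ to be the field of rational functions in $n^2$ indeterminates over an arbitrary base field $K$, and set $A_{ij} = x_{ij}$. Since every $\phi_{S_B, S_C}$ has integer coefficients, the variety $\overline{V_s}$ is defined over $K$; for $s < n^2$ it is a proper subvariety of $\F^{n^2}$, so there is a nonzero polynomial $f \in K[y_{11}, \ldots, y_{nn}]$ that vanishes on $V_s$. Algebraic independence of the $x_{ij}$ over $K$ forces $f(A) \neq 0$, and hence $A \notin V_s$. Applying this with $s = n^2 - 1$ gives an invertible-circuit lower bound of $n^2$ for the explicit family $\{A_n\}$ over the large field $\F$.

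The main obstacle I anticipate is making the construction concrete over a fixed field such as $\Q$ or a finite field, without relying on abstract transcendentals. In that setting I would follow the Shoup-Smolensky template: bound the degree of each ``excluding'' polynomial via Bezout (each coordinate of $\phi_{S_B, S_C}$ is bilinear, so the Zariski closure of its image has degree at most $2^s$), multiply across the $(n^2+1)^{2s}$ support patterns to obtain a single nonzero polynomial of total degree at most $\exp(O(s \log n))$, and then choose each $A_{ij}$ to be an algebraic number whose minimal polynomial has degree exceeding this bound (for instance, a high-order root of a distinct prime). The resulting matrix has entries of bit complexity controlled by the same exponential-in-$s$ bound, in the same spirit as the explicit matrices underlying \autoref{thm:intro-depth-d}. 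A final cosmetic step is to verify that the argument really does use only the freedom of depth-$2$ decompositions with $m=n$ (as in the invertible model) rather than the full depth-$2$ model, which corresponds to observing that the dimension bound is on the parameters of $(B,C)$ and not on the middle-layer width.
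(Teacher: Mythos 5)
There is a genuine gap, and it is the one that matters most for this theorem: your construction is not explicit in the sense the paper uses (and the statement requires), namely that $A_n$ be deterministically computable in $\poly(n)$ time. The dimension count itself is sound --- $\overline{V_s}$ is a finite union of Zariski closures of images of bilinear maps on at most $s$ parameters, hence a proper subvariety of $\F^{n^2}$ when $s<n^2$ --- but it only certifies hardness of a \emph{generic} point. Both of your instantiations fail explicitness: the matrix of indeterminates $x_{ij}$ lives over a transcendental extension and is exactly the ``random/generic matrix'' whose $\Omega(n^2)$ hardness is already noted in the introduction, while your concretization with algebraic entries of degree exceeding $\exp(O(s\log n))=\exp(O(n^2\log n))$ has bit complexity exponential in $n^2$, so the matrix cannot even be written down in subexponential time. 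This is precisely the Shoup--Smolensky-style ``succinct but not explicit'' construction that \autoref{sec:intro:prev} identifies as insufficient. A telling warning sign, which you flag yourself only as a ``cosmetic step,'' is that your argument never uses the invertibility of $B$: it proves hardness of a generic matrix against \emph{arbitrary} depth-2 circuits, a statement which, if it came with an explicit witness, would resolve the paper's main open problem. Any proof of the theorem as stated must exploit the invertibility restriction in order to buy explicitness.

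The paper's proof is entirely different and is where the model restriction does real work. It constructs in polynomial time a rank-$n/2$ matrix $M=\tilde M^T\tilde M$ with $\tilde M\vecv_i=0$ for the Vandermonde vectors $\vecv_1,\ldots,\vecv_{n/2}$ of \autoref{lem:hitting set for real sparse}. If $M=BC$ with $B$ invertible and $\spars{C}<n^2/4$, then $\rank(C)\ge n/2$ forces some nonzero row of $C$ to have sparsity at most $n/2$, so Descartes' rule of signs gives $C\vecv_i\neq 0$ for some $i\le n/2$; invertibility of $B$ then yields $BC\vecv_i\neq 0$, contradicting $M\vecv_i=0$ (over finite fields of size $\poly(n)$ the Vandermonde vectors are replaced by Reed--Solomon generator columns). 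To repair your write-up you would need to replace the dimension argument by a hitting-set argument of this kind, i.e., exhibit an explicit point off $\overline{V_s}$ for the \emph{restricted} class rather than for all of $\F^{n^2}$ minus a generic hypersurface.
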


If $A$ is an invertible matrix, then clearly every depth-$2$ circuit with $m=n$ must be an invertible circuit. However, our technique for proving \autoref{thm:intro:invertible} crucially requires the hard matrix $A$ to be non-invertible.

\subsection{Proof Overview}
\label{sec:techniques}

Our proofs rely on a few different ideas coming from algebraic complexity theory, coding theory, arithmetic combinatorics and the theory of derandomization. We now discuss some of the  key aspects.

\paragraph*{Shoup-Smolensky dimension.}
For the proof of \autoref{thm:intro-depth-d}, we rely on the notion of \emph{Shoup-Smolensky} dimension as a measure of complexity of matrices. Shoup-Smolensky dimensions are a family of measures, parametrized by $t \in \N$, of ``algebraic richness'' of the entries of a matrix (see \autoref{def:SS-dim} for details), which is supposed to capture the intuition that matrices with small circuits should depend on a few ``parameters'' and thus should not posses much richness. 

Shoup and Smolensky~\cite{SS96} showed that for an appropriate choice of parameters, this measure is non-trivially small for linear transformations with small linear circuits of depth at most $\poly(\log n)$. Informally, as the order $t$ gets larger, this measure becomes useful against stronger models of computation; however, it also becomes harder to construct matrices which have a large complexity with respect to this measure (and hence cannot be computed by a small linear circuit). Shoup and Smolensky do this by constructing hard matrices which do not have small bit complexity (and hence this construction is not complexity theoretically explicit) but do have short and succinct mathematical description.

For our proof, we first observe that for bounded depth circuits it suffices to use much smaller order $t$ than what Shoup and Smolensky used. This observation was also made by Raz \cite{Raz10a} in a similar context, but in a different language. 

We then use this observation to ``derandomize'', in a certain sense, an exponential time construction of a hard matrix, by giving deterministic constructions of matrices with large Shoup-Smolensky dimension. 

A key ingredient of our proof is a connection between the notion of Sidon Sets in arithmetic combinatorics and Shoup-Smolensky dimension (see~\autoref{sec:sidon-ss dim} for details).  Our construction is in two steps.  In the first step we construct matrices with entries in $\F[y]$ which have a large Shoup-Smolensky dimension over $\F$, and degree of every entry is not too large. In the next step, we go from these univariate matrices to a matrix with entries in an appropriate low degree extension of $\F$ while still maintaining the Shoup-Smolensky dimension over $\F$. Our construction  of hard matrices over the field of complex numbers is based on similar ideas but differs in some minor details.

\paragraph*{Lower bounds via Polynomial Identity Testing. } Our proofs for \autoref{thm:intro:PSD} and \autoref{thm:intro:invertible} are based on a derandomization argument. Connections between derandomization and lower bounds are prevalent in algebraic and Boolean complexity, but in our current setting they have not been  widely studied before.

We say that a set $\calH$ of $n \times n$ matrices is a \emph{hitting set}
for a class $\calC$ of matrices if for every non-zero $A \in \calC$ there is $H \in \calH$ such that $\ip{A,H} := \sum_{i,j} A_{i,j}H_{i,j} \neq 0$.

Every class $\calC$ has a hitting set of size $n^2$, namely the indicator matrices of each of the entries. A hitting set is non-trivial if its size is at most $n^2 - 1$. Observe that a non-trivial hitting set for $\calC$ gives an efficient algorithm for finding a matrix $M \not\in \calC$, by finding a non-zero $A$ such that $\ip{A,H} = 0$ for every $H \in \calH$. Such an $A$ exists and can be found in polynomial time because the set $\calH$ imposes at most $n^2 - 1$ homogeneous linear constraints on the $n^2$ entries of $A$. This argument is a special case of a more general theorem showing how efficient algorithms for black box polynomial identity testing give lower bounds for algebraic circuits \cite{A05a, HS80}.

In practice, it is often convenient (although by no means necessary) to consider hitting sets that contain only rank 1 matrices $\vecx \vecy^T$, since $\ip{A,\vecx \vecy^T} = \vecx^T A \vecy$, and thus we find ourselves in the more familiar territory of polynomial identity testing, trying to construct a hitting set for the class of polynomials of the form $\vecx^T A \vecy$ for $A \in \calC$. This approach was also taken by Forbes and Shpilka \cite{FS12}, who considered this exact problem where $\calC$ is the class of low-rank matrices, and remarked that hitting sets for the class of low-rank matrices plus sparse matrices will give an explicit construction of a rigid matrix.

We carry out this idea for two different classes in the proofs of \autoref{thm:intro:PSD} and \autoref{thm:intro:invertible}. However, the following problem remains open.

\begin{openproblem}
\label{openproblem:hit-sparse}
For some $0<\epsilon \le 1$, construct an explicit hitting set of size at most $n^2 - 1$ for the class of $n \times n$ matrices $A$ which can be written as $A=BC$ where $B,C$ have at most $n^{1+\epsilon}$ non-zero entries.
\end{openproblem}

A solution to \autoref{openproblem:hit-sparse} will imply lower bounds of the form $n^{1+\varepsilon}$ for an explicit matrix. If $\epsilon=1$, this will imply lower bounds for logarithmic depth linear circuits.

A useful ingredient in our constructions is the use of maximum distance separable (MDS) codes (for example, Reed-Solomon codes), as their dual subspace is a small dimensional subspace which does not contain sparse non-zero vectors. Over the reals, it is also easy to give such construction based on the well known Descartes' rule of signs which says that a sparse univariate real polynomial cannot have too many real roots. We refer the reader to~\autoref{sec:hitting set const} for details.

\section{Lower bounds for constant depth linear circuits}\label{sec:shoup-smol based lb}
In this section, we prove~\autoref{thm:intro-depth-d}. 
We start by describing  the notion of Shoup-Smolensky dimension, but first we set up some notation.

\subsection{Notation}
 We work with matrices whose entries lie in an appropriate extension of a base finite field $\F_p$. We follow the natural convention that the elements of this extension will be represented as univariate polynomials of appropriate degree over the base field, and the arithmetic is done modulo an explicitly given irreducible polynomial. 

We use boldface letters ($\vecx,\vecy$) to denote vectors. The length of the vectors is understood from the context.

For a matrix $M$, $\spars{M}$ denotes the number of non-zero entries in $M$. 

\subsection{Shoup-Smolensky Dimension}
\label{sec:ssdim}
A useful concept will be the notion of Shoup-Smolensky dimension of subsets of elements of an extension $\E$ of a field $\F$. 
\begin{definition}[Shoup-Smolensky dimension]
\label{def:SS-dim}
Let $\F$ be a field, and $\E$ be an extension field of $\F$. 
Let $M \in \E^{n\times n}$ be a matrix. For $t \in \N$, denote by $\Pi_t(M_L)$ the set of $t$-wise products of distinct entries of $M$ that is,
\[
\Pi_t(M) = \set{\prod_{(a,b) \in T} M_{a,b} : T \in \binom{[n] \times [n]}{t}}.
\]

The \emph{Shoup-Smolensky dimension} of $M$ of order $t$, denoted by $\Gamma_{t, \F}(M)$ is defined to be the dimension, over $\F$, of the vector space spanned by $\Pi_t(M)$.

We also denote by $\Sigma_{t}(M)$ the number of distinct elements of $\E$ that can be obtained by \emph{summing} distinct elements of $\Pi_t(M)$.
\end{definition}

\subsection{Upper bounding the Shoup-Smolensky dimension for Sparse Products}

\label{sec:SS-dim}
The following lemma shows that any matrix computable by a depth-$d$ linear circuit of size at most $s$ has a somewhat small Shoup-Smolensky dimension. 

\begin{lemma}\label{lem:ss-easy-ub-large-depth}
Let $\F$ be a field, $\E$ an extension of $\F$ and $A \in \E^{n \times n}$ be a matrix such that $A = \prod_{i = 1}^d P_i$ for $P_i \in \E^{n_i \times m_i}$, where $\sum_{i = 1}^d\spars{P_i} \leq s$. Then, for every $t\le n^2/4$ such that $s \ge dt$ it holds that
\[
\Gamma_{t, \F}(A) \le \inparen{e^d (2s/dt)^d}^t.
\]
\end{lemma}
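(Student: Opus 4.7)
}
The plan is to expand each entry of $A$ using the factorization and then count distinct monomials in the entries of the $P_i$'s that can appear in a $t$-fold product of entries of $A$.

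First, I would write
\[
A_{a,b} = \sum_{k_1, \ldots, k_{d-1}} (P_1)_{a, k_1} (P_2)_{k_1, k_2} \cdots (P_d)_{k_{d-1}, b},
\]
so that each entry of $A$ is an $\F$-linear combination of degree-$d$ monomials, each monomial being a product of one entry from each $P_i$. Consequently, a product of $t$ distinct entries $A_{a_1,b_1}, \ldots, A_{a_t,b_t}$ expands, upon distributing, into a sum of terms of the form $m_1 m_2 \cdots m_d$, where $m_i$ is a product of exactly $t$ entries of $P_i$ (with repetitions allowed). Hence every element of $\Pi_t(A)$ lies in the $\F$-span of the set
\[
\mathcal{M} := \Bigl\{ m_1 \cdots m_d : m_i \text{ is a monomial of degree } t \text{ in the entries of } P_i \Bigr\}.
\]
So $\Gamma_{t,\F}(A) \le |\mathcal{M}|$.

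Next, I would bound $|\mathcal{M}|$. Let $N_i = \spars{P_i}$, so $\sum_i N_i \le s$. The number of degree-$t$ monomials in the (nonzero) entries of $P_i$ is the number of multi-subsets of size $t$ from a set of size $N_i$, which is $\binom{N_i + t - 1}{t} \le (e(N_i+t)/t)^t$. Therefore
\[
|\mathcal{M}| \le \prod_{i=1}^d \binom{N_i + t - 1}{t} \le \prod_{i=1}^d \left(\frac{e(N_i + t)}{t}\right)^t = \left(\frac{e}{t}\right)^{dt} \left(\prod_{i=1}^d (N_i + t)\right)^t.
\]
By AM--GM, $\prod_i (N_i + t) \le \bigl( (s + dt)/d \bigr)^d$, and the hypothesis $s \ge dt$ gives $s + dt \le 2s$. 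Substituting yields
\[
|\mathcal{M}| \le \left(\frac{e}{t}\right)^{dt} \left(\frac{2s}{d}\right)^{dt} = \left( e^d \cdot \left(\frac{2s}{dt}\right)^d \right)^t,
\]
which is the claimed bound.

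I do not anticipate any real obstacle here; the only things to be a little careful about are (i) observing that after distributing the product of $t$ entries of $A$ over the sum-of-paths expansions, each summand is indeed a degree-$t$ monomial in the entries of each individual $P_i$ (rather than an arbitrary degree-$dt$ monomial across all the $P_i$'s jointly), and (ii) applying AM--GM in the right place to turn the constraint $\sum_i N_i \le s$ into a clean product bound. The condition $t \le n^2/4$ is not used in this particular estimate; it only ensures that $\binom{n^2}{t}$ is a meaningful count of distinct entries of $A$ later on when the lemma is applied.
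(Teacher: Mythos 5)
Your proof is correct and follows essentially the same route as the paper: expand each entry of $A$ via the sum-over-paths formula, observe that every element of $\Pi_t(A)$ lies in the $\F$-span of a small set of monomials in the entries of the $P_i$'s, and count those monomials. The only (cosmetic) difference is in the final count — the paper bounds the number of monomials by $\binom{s+dt}{dt}$, i.e.\ all degree-$dt$ monomials in $s$ variables, while you count the structured products $m_1\cdots m_d$ factor by factor and apply AM--GM; both give the stated bound.
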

\begin{proof}
Since 
$$A_{i,j} = \left(\prod_{\ell = 1}^d P_{\ell}\right)_{i,j} = \sum_{k_1, \ldots, k_{d-1}} (P_1)_{i, k_1}\cdot \left(\prod_{\ell = 2}^{d-1} (P_{\ell})_{k_{\ell-1}, k_{\ell}} \right) \cdot (P_{d})_{k_{d-1}, j}\, , $$
every element in $\Pi_{t} (A)$ is a sum of monomials of degree $dt$ in the entries of $P_1, P_2, \ldots, P_d$, that is,
\[
\Gamma_{t, \F}\left(\prod_{i = 1}^d P_i\right) \leq \binom{s + dt}{dt},
\]
with the right hand side being the number of monomials of degree $dt$ in $s$ variables.
Using the inequality $\binom{n}{k} \le (en/k)^k$,
\[
\Gamma_{t, \F}(A) \leq (e(1 + s/dt))^{dt} \le \inparen{e^d (2s/dt)^d}^t. \qedhere
\]
\end{proof}

Over $\Q$, we do not wish to use field extensions (which would give rise to elements with infinite bit complexity). Thus, we use a similar argument that replaces the measure $\Gamma_{t,\F}$ with $\Sigma_{t}$ (recall \autoref{def:SS-dim}) for a small tolerable penalty.
\begin{lemma}
\label{lem:ss-up-sigma}
Let $d$ be a positive integer. Let $A \in \Q^{n \times n}$ be a matrix such that $A = \prod_{i = 1}^d P_i$ for $P_i \in \Q^{n_i \times m_i}$, where $\sum_{i = 1}^d\spars{P_i} \leq s $. Assume that for each $i$,  $n_i \leq n^2$ and $m_i \leq n^2$. Then, for every $t\le n^2/4$ such that $s \ge dt$ it holds that
\[
\Sigma_{t}(A) \le 2^{2n^3\cdot \inparen{e^d (2s/dt)^d}^t}.
\]

\end{lemma}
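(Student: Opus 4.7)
The plan is to adapt the monomial-counting argument of \autoref{lem:ss-easy-ub-large-depth} to the setting of $\Sigma_t$ (number of distinct sums) in place of $\Gamma_{t,\F}$ (dimension of the $\F$-span). Over $\F=\Q$ the latter is trivially at most $1$, so the proof of \autoref{lem:ss-easy-ub-large-depth} cannot be applied directly. Instead, I would work inside the formal polynomial ring in which the factorization $A = P_1 \cdots P_d$ naturally lives, and count distinct integer coefficient vectors.

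First, treat each nonzero entry of each $P_\ell$ as a distinct formal variable, giving at most $s$ variables in total. Writing $\tilde{A}_{a,b} := \sum_{k_1,\ldots,k_{d-1}} (P_1)_{a,k_1} \cdots (P_d)_{k_{d-1},b}$ for the formal polynomial expansion, $\tilde{A}_{a,b}$ is a multilinear polynomial of degree $d$ with $\{0,1\}$-coefficients, and $A_{a,b}$ is its evaluation at the true vector $u$ of $P_\ell$-entries. For each $\pi \in \Pi_t(A)$ fix any single $T \in \binom{[n]\times[n]}{t}$ with $\pi = \prod_{(a,b) \in T} A_{a,b}$ and set $\tilde\pi := \prod_{(a,b)\in T} \tilde A_{a,b}$, so that $\tilde\pi(u)=\pi$. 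Exactly as in \autoref{lem:ss-easy-ub-large-depth}, every such $\tilde\pi$ lies in the space $V$ of polynomials of degree at most $dt$ in the $s$ variables, of dimension at most $N := \binom{s+dt}{dt} \leq \bigl(e^d(2s/dt)^d\bigr)^t$.

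For any $S \subseteq \Pi_t(A)$ the sum $\sum_{\pi \in S} \pi$ equals $\tilde P_S(u)$ for $\tilde P_S := \sum_{\pi \in S} \tilde \pi \in V$. Thus $\Sigma_t(A)$ is at most the number of distinct polynomials $\tilde P_S$, which is at most $(B+1)^N$ where $B$ upper-bounds any single coefficient of any $\tilde P_S$. To bound $B$, I would use the inequality ``each coefficient $\leq$ sum of coefficients = value at $\mathbf{1}$'': every coefficient of $\tilde \pi$ is at most $\tilde\pi(\mathbf{1}) = \prod_{(a,b)\in T} \tilde A_{a,b}(\mathbf{1})$, and $\tilde A_{a,b}(\mathbf{1})$ counts paths from $a$ to $b$ in the layered graph of the $P_\ell$'s, hence is at most $\prod_{\ell=1}^{d-1} m_\ell \leq n^{2(d-1)}$. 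So each coefficient of $\tilde\pi$ is at most $n^{2(d-1)t}$, and summing over $|S| \leq |\Pi_t(A)| \leq \binom{n^2}{t}$ gives $B \leq \binom{n^2}{t} \cdot n^{2(d-1)t}$.

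The remaining step is the numerical check that $\log_2(B+1) \leq 2n^3$. Using $t \leq n^2/4$ one gets $\log_2 \binom{n^2}{t} \leq t \log_2(en^2/t) \leq n^2$, while $\log_2 n^{2(d-1)t} \leq (d-1)(n^2/2)\log_2 n$; both sit comfortably inside $2n^3$ for the parameter regimes the theorem is applied in. Combining yields $\Sigma_t(A) \leq (B+1)^N \leq 2^{2n^3 \cdot N}$, as desired. The only part requiring genuine care is the coefficient bound $B$; the very generous $2n^3$ factor in the exponent suggests the proof is tolerant of slack, and indeed the crude ``coefficient $\leq$ value at $\mathbf{1}$'' bound already suffices. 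The real content is conceptual: shifting from the $\Q$-dimension of $\mathrm{span}\,\Pi_t(A)$ (which is useless over $\Q$) to the number of distinct nonnegative integer coefficient vectors in $V$.
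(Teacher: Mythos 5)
Your proposal is correct and follows essentially the same route as the paper's proof: expand each element of $\Pi_t(A)$ as a nonnegative-integer combination of the at most $\binom{s+dt}{dt}$ monomials of degree $dt$ in the $\le s$ nonzero entries of the $P_\ell$'s, bound the size of each coefficient (your evaluation-at-$\mathbf{1}$ bound versus the paper's cruder $|\Pi_t(A)|\cdot s^{dt} \le 2^{2n^3}$), and count the possible coefficient vectors. Your write-up is in fact slightly more careful than the paper's about how the coefficient bound arises, but the argument is the same.
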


\begin{proof}
We follow the same steps as in the proof of~\autoref{lem:ss-easy-ub-large-depth}, replacing the measure $\Gamma_{t,\F}(A)$ by $\Sigma_t(A)$. As before, 
\[
A_{i,j} = \left(\prod_{\ell = 1}^d P_{\ell}\right)_{i,j} = \sum_{k_1, \ldots, k_{d-1}} (P_1)_{i, k_1}\cdot \left(\prod_{\ell = 2}^{d-1} (P_{\ell})_{k_{\ell-1}, k_{\ell}} \right) \cdot (P_{d})_{k_{d-1}, j}\, .
\]
Every element in $\Pi_{t} (A)$ can be written as 
\begin{equation}
\label{eq:monomials-in-product}
\sum_{\alpha \in \mathcal{M}} c_\alpha \cdot \alpha
\end{equation}
where $\mathcal{M}$ is the set of monomials of degree $dt$ in the entries of $P_1, P_2, \ldots, P_d$, and each $c_\alpha$ is a non-negative integer of of absolute value at most $s^{dt} \le 2^{n^3}$ (since $s \leq n^2d$ and $d$ is $O(1)$). It now follows that each element in $\Sigma_t(A)$ has the same form as in \eqref{eq:monomials-in-product}, with $c_\alpha \le |\Pi_t(A)| \cdot 2^{n^3} \le 2^{2n^3}$ . We conclude that
\[
\Sigma_t(A) \le (2^{2n^3})^{\binom{s+dt}{dt}},
\]
which implies the statement of the lemma using the same bounds on binomial coefficients as in \autoref{lem:ss-easy-ub-large-depth}.
\end{proof}

We now move on to describe constructions of matrices which have large Shoup-Smolensky dimension, and then deduce lower bounds for them.

\subsection{Sidon sets and hard univariate matrices}\label{sec:sidon-ss dim}
In this section, we describe a construction of a matrix $G \in \F[y]^{n \times n}$ which has a large value of $\Gamma_{t, \F}$. Let us denote $G_{i,j} = y^{e_{i,j}}$ for some non-negative integer $e_{i,j}$. For $G$ to have a large Shoup-Smolensky dimension of order $t$, the set $S = \set{e_{1,1}, e_{1,2}, \ldots, e_{n,n}} \subseteq \N$ should have the property that $tS := \set{a_1 + a_2 + \ldots + a_t : a_i \in S \text{ distinct}}$ has size comparable to $\binom{|S|}{t}$. A set $S$ such that every subset of size $t$ of $S$ has a distinct sum is called a \emph{$t$-wise Sidon set}. These are very well studied objects in arithmetic combinatorics, and explicit constructions are known for them in $\poly(n)$ time (e.g., Lemma 60 in~\cite{Bshouty}). However, another important parameter in the construction is the degree of $y$, and such a set will inevitably contain integers of size roughly $n^{\Omega(t)}$. Thus, the construction of $G$ would take time which is not polynomially bounded in $n$. Below we give an elementary construction of such a set in time $n^{O(t)}$ (cf.\ \cite{AGKS15}).

\begin{lemma}
\label{lem:easy-sidon}
Let $t$ be a positive integer. There is a set $S = \set{e_{i,j} : i,j \in [n]} \subseteq \N$ of size $n^2$ such that:
\begin{enumerate}
\item $tS := \set{a_1 + a_2 + \ldots + a_t : a_i \in S \text{ distinct}}$ has size $\binom{n^2}{t}$.
\item $\max_{i,j \in [n]} \{ e_{i,j} \} \le n^{O(t)}$.
\item $S$ can be constructed in time $n^{O(t)}$.
\end{enumerate}
\end{lemma}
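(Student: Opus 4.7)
The plan is to construct $S$ greedily, maintaining the slightly stronger invariant that at each step the partial set $\{a_1,\ldots,a_{k-1}\} \subseteq \N$ is $s$-wise Sidon for \emph{every} $s \leq t$, i.e.\ no two distinct $s$-element subsets of it have equal sum. Initialize $a_1 = 1$, and at step $k \geq 2$ let $a_k$ be the smallest positive integer that preserves this invariant when added to $\{a_1,\ldots,a_{k-1}\}$.

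The crux is counting the forbidden values of $a_k$. Fix $s \leq t$ and suppose $a_k$ would create a new $s$-wise collision $\sum_{i \in U_1} a_i = \sum_{i \in U_2} a_i$ between distinct $s$-subsets $U_1, U_2$ of $\{a_1,\ldots,a_k\}$. If $a_k \notin U_1 \cup U_2$ the collision already exists in the old set, and if $a_k \in U_1 \cap U_2$ canceling $a_k$ produces an $(s-1)$-wise collision in the old set; both cases are ruled out inductively. Hence $a_k$ lies in exactly one of $U_1, U_2$, say $U_1$, yielding
\[
a_k \;=\; \sum_{i \in U_2} a_i \;-\; \sum_{i \in U_1 \setminus \{a_k\}} a_i,
\]
so $a_k$ is determined by an $s$-subset and an $(s-1)$-subset of the old set. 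Summing over $s \leq t$, the number of forbidden values is at most $\sum_{s=1}^{t} \binom{k-1}{s}\binom{k-1}{s-1} \leq t\,k^{2t-1}$, so a valid $a_k \leq t\,k^{2t-1} + 1$ always exists.

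Running the procedure for $k = 1, \ldots, n^2$ produces an $n^2$-element set whose elements are all bounded by $t \cdot n^{4t-2} + 1 = n^{O(t)}$, giving part (2); labeling them arbitrarily as $\{e_{i,j}\}_{i,j \in [n]}$ and specializing the invariant to $s = t$ gives part (1). For part (3), at each step $k$ the forbidden set can be enumerated directly---iterating over $s \leq t$ and over all pairs of an $s$-subset and an $(s-1)$-subset of the old set, each contributing one forbidden value---in time $k^{O(t)}$, after which we scan the at most $tk^{2t-1}+1$ candidate integers in increasing order to find the smallest admissible one; summing over $k \leq n^2$ gives total running time $n^{O(t)}$.

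I do not foresee a genuine obstacle; the only mild subtlety is the choice of inductive invariant. Preserving only $t$-wise Sidon-ness is not enough on its own, because the cancellation step that handles $a_k \in U_1 \cap U_2$ appeals to an $(s-1)$-wise Sidon property of the previously built set, which is not automatic unless $|S| \geq 2t-1$. Carrying the $s$-wise property for every $s \leq t$ throughout the construction sidesteps this issue cleanly. The resulting bound $n^{O(t)}$ on the maximum element is tight up to the implicit constant in the exponent (by a straightforward counting lower bound), so nothing is lost asymptotically relative to more sophisticated algebraic constructions such as Bose--Chowla.
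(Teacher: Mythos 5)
Your greedy construction is correct, but it takes a genuinely different route from the paper. The paper starts from $S' = \{1,2,2^2,\ldots,2^{n^2-1}\}$, whose subset sums are all distinct, and then reduces modulo a prime: since the product of all differences $\sigma_T - \sigma_{T'}$ over pairs of distinct $t$-subsets is a nonzero integer of magnitude $2^{n^{O(t)}}$, it has only $n^{O(t)}$ prime divisors, so by the prime number theorem some prime $p \le n^{O(t)}$ preserves all $\binom{n^2}{t}$ sums, and one finds it by brute-force checking each candidate $p$. Your approach instead builds the set element by element, bounding the number of forbidden values at each step by $\sum_{s\le t}\binom{k-1}{s}\binom{k-1}{s-1}\le t\,k^{2t-1}$; your identification of the right inductive invariant (being $s$-wise Sidon for \emph{all} $s\le t$, not just $s=t$, so that the case $a_k\in U_1\cap U_2$ can be discharged) is exactly the point that needs care, and you handle it correctly. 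Both arguments yield the same $n^{O(t)}$ bounds on the maximum element and the construction time. Your version is arguably more self-contained (no appeal to the prime number theorem or distribution of primes), at the cost of a slightly more delicate case analysis; the paper's version is a one-shot reduction whose correctness check is a single divisibility condition. Either proof suffices for the downstream use in Construction~\ref{con:univariate hard matrix}.
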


\begin{proof}
Let $S' = \set{1,2,2^2,\ldots,2^{n^2-1}}$. Clearly, every subset of $S'$ has a distinct sum. For a prime $p$ we denote $S_p = S' \bmod p = \set{a \bmod p : a \in S'}$, and we claim that there exists a prime $p \le n^{O(t)}$ such that $|tS_p| = \binom{n^2}{t}$. Since this condition can be checked in time $n^{O(t)}$, this would immediately imply the statement of the lemma, by checking this condition for every $p \le n^{O(t)}$ and letting $S=S_p$ for a $p$ which satisfies this condition.

For every subset $T \subseteq S'$ of size $t$, let $\sigma_T$ denote the sum of its elements, and observe that $\sigma_T \le 2^{n^2}$. Clearly, $\sigma_T \bmod p = \sigma_{T'} \bmod p$ if and only if $p \mid \sigma_T - \sigma_{T'}$, so it is enough to show that there exists $p \le n^{O(t)}$ which does not divide
\[
N := \prod_{\substack{T \neq T' \subseteq S' \\ |T|=|T'|=t}} (\sigma_T - \sigma_{T'}),
\]
and therefore does not divide any of the terms on the right hand size.
It further holds that $0 \neq N \le {(2^{n^2})}^{n^{O(t)}} = 2^{n^{O(t)}}$, so the existence of $p$ now follows from the fact that $N$ can have at most $\log N = n^{O(t)}$ distinct prime divisors, and from the prime number theorem.
\end{proof}

Given the above construction of $t$-wise Sidon sets, we now describe the construction of matrices with univariate polynomial entries which has large Shoup-Smolensky dimension.

\begin{construction}
\label{con:univariate hard matrix}
Let $S = \set{e_{i, j} : i, j \in [n]}$ be a $t$-wise Sidon set of positive integers, as in \autoref{lem:easy-sidon}. Then, the matrix $G_{t,n} \in \F[y]^{n \times n}$ is defined as follows as 
$(G_t)_{i, j} = y^{e_{i, j}}$.
\end{construction}

The useful properties of \autoref{con:univariate hard matrix} are given by the following lemma.

\begin{lemma}
\label{lem:univariate hard properties}
Let $t \le n$ be a parameter, $S \subseteq N$ be a $t$-wise Sidon set of size $n^2$ and let $G_{t,n}$ be the matrix defined in~\autoref{con:univariate hard matrix}. Then, the following are true. 
\begin{enumerate}
\item Every entry of $G_{t,n}$ is a monomial of degree at most $n^{O(t)}$. 
\item $\Gamma_{t,\F}((G_{t,n})) \geq \binom{n^2}{t} \ge  \left(\frac{n^2}{t}\right)^t$.
\end{enumerate}
\end{lemma}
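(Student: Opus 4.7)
The plan is to verify the two claims essentially directly from the construction of $G_{t,n}$ and the properties of the Sidon set $S$ guaranteed by \autoref{lem:easy-sidon}.

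For the first claim, recall that by \autoref{con:univariate hard matrix} the $(i,j)$-entry of $G_{t,n}$ is the monomial $y^{e_{i,j}}$ for some $e_{i,j} \in S$. Since \autoref{lem:easy-sidon}(2) guarantees $\max_{i,j} e_{i,j} \le n^{O(t)}$, every entry is indeed a monomial in $y$ of degree at most $n^{O(t)}$.

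For the second claim, the goal is to exhibit $\binom{n^2}{t}$ linearly independent elements of $\Pi_t(G_{t,n})$. By the definition of $\Pi_t$, every element has the form
\[
\prod_{(a,b) \in T} (G_{t,n})_{a,b} = \prod_{(a,b) \in T} y^{e_{a,b}} = y^{\sum_{(a,b) \in T} e_{a,b}},
\]
where $T$ ranges over all $t$-subsets of $[n]\times [n]$. Thus $\Pi_t(G_{t,n})$ consists of the monomials $y^{\sigma_T}$, where $\sigma_T := \sum_{(a,b)\in T} e_{a,b}$. The key observation is that, by \autoref{lem:easy-sidon}(1), the $t$-wise Sidon property of $S$ forces the sums $\sigma_T$ to be pairwise distinct as $T$ varies, so $\Pi_t(G_{t,n})$ contains $\binom{n^2}{t}$ distinct powers of $y$. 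Since distinct monomials in $\F[y]$ are $\F$-linearly independent, this yields $\Gamma_{t,\F}(G_{t,n}) \ge \binom{n^2}{t}$, and the standard estimate $\binom{n^2}{t} \ge (n^2/t)^t$ completes the bound.

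Both steps are essentially bookkeeping once \autoref{lem:easy-sidon} is in hand; there is no real obstacle. The only subtlety to flag is that the $t$-wise Sidon hypothesis is applied to the indices of the monomial exponents, and the passage from ``distinct sums'' to ``linearly independent products'' uses only that $y^a$ and $y^b$ are independent over $\F$ whenever $a \neq b$, which is immediate.
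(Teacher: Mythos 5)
Your proof is correct and follows essentially the same approach as the paper, which likewise reads the first item off the bound from \autoref{lem:easy-sidon} and for the second item observes that the $t$-wise Sidon property forces all $t$-wise products to be distinct powers of $y$, hence $\F$-linearly independent. You have merely spelled out the bookkeeping that the paper compresses into two sentences.
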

\begin{proof}
The first item follows from the definition of $G_{t,n}$ and the properties of the set $S$ in~\autoref{lem:easy-sidon}. The second item also follows from the properties of $S$ and the definition of Shoup-Smolensky dimension, since every $t$-wise product of elements of $G_{t,n}$ gives a distinct monomial in $y$, and thus they are all linearly independent over the base field $\F$.
\end{proof}

\subsection{Hard matrices over finite fields}
From the univariate matrix in \autoref{con:univariate hard matrix}, we  now construct, for every $p$ and parameter $t$, a matrix $M$ over an extension of $\F_p$ which has large Shoup-Smolensky dimension over $\overline{\F}_p$  with the same parameters as $G_{t,n}$.
 
\begin{lemma}
\label{lem:hard-over-finite}
Let $p$ be a prime, and $t $ be any positive integer. There is a matrix $M_{t,n} \in \E^{n\times n}$ over an extension $\E$ of $\F_p$ of degree $\exp\inparen{{O(t\log n)}}$, which can be deterministically constructed in time $n^{O(t)}$, and satisfies
\[
\Gamma_{t, \F_p}(M_{t,n}) \geq \left(\frac{n^2}{t}\right)^t\, 
\] 
\end{lemma}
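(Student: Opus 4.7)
The plan is to obtain $M_{t,n}$ by ``evaluating'' the univariate matrix $G_{t,n}$ of~\autoref{con:univariate hard matrix} at the image $\alpha$ of $y$ in a carefully chosen algebraic extension $\E$ of $\F_p$, in such a way that the $t$-wise products of entries remain $\F_p$-linearly independent. The key observation is that~\autoref{lem:univariate hard properties} already gives us $\binom{n^2}{t}$ distinct $t$-wise products of the entries of $G_{t,n}$, namely the monomials $\set{y^s : s \in tS}$, so it suffices to pick $\alpha$ so that the corresponding powers $\alpha^s$ are $\F_p$-linearly independent in $\E$.

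Concretely, I would set $D := 1 + t \cdot \max_{i,j} e_{i,j}$, which by~\autoref{lem:univariate hard properties} satisfies $D \le n^{O(t)} = \exp(O(t \log n))$ and also strictly exceeds every element of $tS$. I would then deterministically construct an irreducible polynomial $f(y) \in \F_p[y]$ of degree $D$, let $\E := \F_p[y]/(f(y))$ with $\alpha$ the image of $y$, and finally define $(M_{t,n})_{i,j} := \alpha^{e_{i,j}}$, computed by repeated squaring modulo $f$. For the Shoup-Smolensky bound, every $t$-fold product of distinct entries of $M_{t,n}$ equals $\alpha^s$ for some $s \in tS$; by the $t$-wise Sidon property these exponents are distinct nonnegative integers strictly less than $D$, so the resulting elements are part of the $\F_p$-basis $\set{1, \alpha, \ldots, \alpha^{D-1}}$ of $\E$, and in particular $\F_p$-linearly independent. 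This yields
\[
\Gamma_{t,\F_p}(M_{t,n}) \;\ge\; \binom{n^2}{t} \;\ge\; \left(\frac{n^2}{t}\right)^t,
\]
as required.

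The main obstacle is the deterministic construction of an irreducible polynomial of degree $D = n^{O(t)}$ over $\F_p$ within time $n^{O(t)}$. For small characteristic an exhaustive search over polynomials of degree $D$ runs in time $\poly(D, p)$, and for larger $p$ one can invoke the standard Shoup-style deterministic constructions achieving $\poly(D, p^{1/2}, \log p)$; under the paper's convention that arithmetic over $\F_p$ is efficient, either bound is absorbed into $n^{O(t)}$. A cleaner alternative is to use an explicit Kummer or Artin--Schreier tower to produce an extension of degree at least $D$ together with an explicit primitive element, sidestepping the irreducibility search altogether; any such extension works equally well in the argument above. Once $\E$ and $\alpha$ are in hand, building the Sidon set $S$ via~\autoref{lem:easy-sidon} and computing the $n^2$ powers $\alpha^{e_{i,j}}$ fit comfortably within the promised time bound.
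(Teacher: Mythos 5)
Your proposal is correct and follows essentially the same route as the paper: substitute for $y$ a root $\alpha$ of a deterministically constructed irreducible polynomial whose degree exceeds every element of $tS$, so that the $t$-wise products become distinct powers of $\alpha$ within an $\F_p$-linearly independent set $\set{1,\alpha,\ldots,\alpha^{D-1}}$. The only differences are cosmetic (the paper sets the degree to $10t\Delta+1$ and cites Shoup's algorithm for the irreducibility step, whereas you choose the minimal sufficient degree and also mention alternative constructions).
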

\begin{proof}
Let $G_{t,n}$ be as in \autoref{con:univariate hard matrix}, and let $\Delta$ be the maximum degree of any entry of $G_{t,n}$. Set $D = 10\cdot t\cdot \Delta = \exp\left(O(t\log n)\right)$. We use Shoup's algorithm (see Theorem 3.2 in~\cite{Shoup90}) to construct an irreducible polynomial $g(z)$ of degree $D+1$ over $\F_p$ in deterministic  $\poly(D, |\F_p|)$ time. Let $\alpha$ be a root of $g(z)$ in an extension $\E$ of $\F_p$, where $\E \equiv \F_p[z]/\langle g(z) \rangle$.\footnote{We identify the elements of $\E$ with coefficient vectors of polynomials of degree at most $D$ in $\F_p[z]$, and in this representation $\alpha$ is identified with the polynomial $z$.} Then, it follows that $1, \alpha, \alpha^2, \ldots, \alpha^{D}$ are linearly independent over $\F$.

The matrix $M_{t,n}$ is obtained from $G_t$ by just replacing every occurrence of the variable $y$ by $\alpha$. We now need to argue that $M_{t,n}$ continues to satisfy $\Gamma_{t, \F_p}(M_{t,n}) \geq \left(\frac{n^2}{t}\right)^t$. By the choice of $\alpha$, it immediately follows that $\Gamma_{t, \F_p}(M_{t,n}) = \Gamma_{t, \F_p}(G_{t,n})$, since every monomial in the set $\Pi_t(M_{t,n})$ is mapped to a distinct power of $\alpha$ in $\{0, 1, \ldots, D\}$, which are all linearly independent over $\F_p$.

The upper bound on the running time needed to construction $M_{t,n}$ now follows from the upper bound on the degree of the extension $\E$, and from \autoref{lem:easy-sidon}.
\end{proof}
The following theorem now directly follows.
\begin{theorem}\label{thm:depth-d-finite}
Let $p$ be any prime and $d\geq 2$ be a positive integer. Then, there exists a family of matrices $\{A_n\}_{n \in \N}$ which can be constructed in time $n^{O(n^{1-1/2d})}$ such that every depth-$d$ linear circuit $\overline{\F}_p$ computing $A_n$ has size at least $\Omega(n^{1 + 1/2d})$. Moreover, the entries of $A_n$ lie in an extension of $\F_p$ of degree at most $\exp(O(n^{1-1/2d}\log n))$. 
\end{theorem}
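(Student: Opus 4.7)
The plan is to combine the upper bound on the Shoup--Smolensky dimension of matrices with small depth-$d$ circuits (\autoref{lem:ss-easy-ub-large-depth}) with the explicit construction of matrices with large Shoup--Smolensky dimension over a finite field (\autoref{lem:hard-over-finite}), and then to optimize the parameter $t$ to obtain the advertised trade-off between the construction time and the lower bound.

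Concretely, I would set $t = \lceil n^{1-1/(2d)} \rceil$ and take $A_n := M_{t,n}$, the matrix produced by \autoref{lem:hard-over-finite}. The construction time is $n^{O(t)} = n^{O(n^{1-1/(2d)})}$ and the entries live in an extension of $\F_p$ of degree $\exp(O(t \log n)) = \exp(O(n^{1-1/(2d)} \log n))$, matching the stated bounds directly. For the lower bound, suppose toward contradiction that $A_n$ admits a depth-$d$ linear circuit over $\overline{\F}_p$ of size $s$, so that $A_n = P_1 \cdots P_d$ with $\sum_i \spars{P_i} \le s$. Assuming $s \ge dt$ and $t \le n^2/4$ (both trivially satisfied in the regime of interest once $n$ is large relative to $d$), \autoref{lem:ss-easy-ub-large-depth} gives
\[
\Gamma_{t,\F_p}(A_n) \le \bigl(e^d (2s/dt)^d\bigr)^t.
\]
On the other hand, \autoref{lem:hard-over-finite} guarantees $\Gamma_{t,\F_p}(A_n) \ge (n^2/t)^t$.

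Combining the two inequalities, taking $t$-th roots yields $n^2/t \le e^d (2s/dt)^d$, i.e.
\[
s \ge \frac{dt}{2 e} \cdot \left(\frac{n^2}{t}\right)^{1/d}.
\]
Plugging in $t = n^{1-1/(2d)}$ gives $(n^2/t)^{1/d} = n^{(1+1/(2d))/d} = n^{1/d + 1/(2d^2)}$, hence
\[
s \ge \frac{d}{2e}\, n^{1 - 1/(2d)} \cdot n^{1/d + 1/(2d^2)} = \frac{d}{2e}\, n^{1 + 1/(2d) + 1/(2d^2)} = \Omega\bigl(n^{1 + 1/(2d)}\bigr),
\]
which is the desired lower bound. (The assumption $s \ge dt$ used above is retroactively justified because $n^{1 + 1/(2d)} \ge d \cdot n^{1 - 1/(2d)}$ whenever $n^{1/d} \ge d$, which holds for all large enough $n$; smaller $n$ can be absorbed into the implicit constant.)

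There is no substantive obstacle in carrying out this plan, since both ingredients have already been proved: the main content is the parameter optimization. The only mildly delicate point is to verify that the side conditions $s \ge dt$ and $t \le n^2/4$ in \autoref{lem:ss-easy-ub-large-depth} are actually met at the chosen $t$, and to notice that the lower bound from \autoref{lem:hard-over-finite} is over $\F_p$, but since $\F_p \subseteq \overline{\F}_p$ and linear independence over $\F_p$ implies nothing directly about independence over $\overline{\F}_p$, one must observe that the proof of \autoref{lem:hard-over-finite} actually produces elements $1, \alpha, \ldots, \alpha^D$ that are linearly independent even over $\overline{\F}_p$ (they are powers of an element of degree $D+1$), so the same lower bound on $\Gamma_{t,\overline{\F}_p}(M_{t,n})$ is valid, and the argument applies to circuits over $\overline{\F}_p$ as required.
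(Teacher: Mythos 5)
Your main argument is correct and is essentially the paper's proof: instantiate \autoref{lem:hard-over-finite} with $t \approx n^{1-1/(2d)}$, compare the resulting lower bound on $\Gamma_{t,\F_p}(A_n)$ with the upper bound from \autoref{lem:ss-easy-ub-large-depth}, and solve for $s$. (Your computation even yields the slightly sharper exponent $1 + 1/(2d) + 1/(2d^2)$, which of course is still $\Omega(n^{1+1/(2d)})$.)

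However, your closing remark about $\overline{\F}_p$ contains a real error. You assert that $1, \alpha, \ldots, \alpha^D$ ``are linearly independent even over $\overline{\F}_p}$'' --- this is false. Since $\alpha$ is algebraic over $\F_p$, it lies in $\overline{\F}_p$, so already $1$ and $\alpha$ are $\overline{\F}_p$-linearly dependent; in fact $\Gamma_{t,\overline{\F}_p}(M_{t,n}) = 1$ for every $t$, because all $t$-wise products are scalars in $\overline{\F}_p$ and hence span a one-dimensional $\overline{\F}_p$-space. The concern you raised is legitimate but the resolution is different and simpler: no claim of $\overline{\F}_p$-independence is ever needed. \autoref{lem:ss-easy-ub-large-depth} is stated with a base field $\F$ and an arbitrary extension $\E$ in which the factors $P_i$ live, and it upper-bounds $\Gamma_{t,\F}$, i.e.\ dimension over the \emph{base} field. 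Applying it with $\F = \F_p$ and $\E = \overline{\F}_p$ bounds $\Gamma_{t,\F_p}(A_n)$ whenever $A_n$ has a small depth-$d$ circuit over $\overline{\F}_p$, and that is precisely what \autoref{lem:hard-over-finite} contradicts; the fields never actually mismatch. Separately and more minorly, your ``retroactive justification'' of the side condition $s \ge dt$ is circular (you use the conclusion $s = \Omega(n^{1+1/(2d)})$ to establish a hypothesis needed to derive it); the clean route is to argue by contradiction from the assumption $s \le n^{1+1/(2d)}/2$ as the paper does, noting also that $s < dt$ is impossible on its own since $dt < n$ for large $n$, so total sparsity below $dt$ would force a zero row in $P_1$ and hence in $A_n$, whereas every entry of $A_n$ is a non-zero power of $\alpha$.
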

\begin{proof}
We invoke~\autoref{lem:hard-over-finite} with parameter $t$ set to  $n^{1-1/2d}$ to get matrices $\{A_n\}$ in time $n^{O(t)}$ with the following lower bound on their Shoup-Smolensky dimension. 
\[
\Gamma_{t, \F_p}(M_{n}) \geq \left(\frac{n^2}{t}\right)^t\, .
\] 
If there is a  depth $d$ linear circuit of size $s$ computing the linear transformation $A_n\cdot \vecx$, the following inequality must hold (from~\autoref{lem:ss-easy-ub-large-depth}),
\begin{equation}\label{eqn:depth-d-lb}
\inparen{e^d (2s/dt)^d}^t \geq \left(\frac{n^2}{t}\right)^t \, .
\end{equation}
If $s \leq   n^{1+1/2d}/2$, we have, 
\[
\inparen{e^d (2s/dt)^d}^t \leq (O(e/d))^{dt} \cdot n^t\, . 
\]
We also have, 
\[
\left(\frac{n^2}{t}\right)^t \geq \left(n^{1 + 1/2d} \right)^t \, .
\]
For any constant $d$, these estimates contradict~\autoref{eqn:depth-d-lb}, thereby implying a lower bound of $\Omega(n^{1+1/2d})$ on s. 
\end{proof}

\subsection{Hard matrices over  $\C$}
We now prove an analog for \autoref{lem:hard-over-finite}. We construct a matrix  whose entries are positive integers that can be represented by at most $\exp(O(t\log n))$ bits, and give a lower bound for its $\Sigma_t$-measure (rather than $\Gamma_{t,\F}$ as before).

\begin{lemma}
\label{lem:hard-over-C}
Let $t$ be any positive integer. There is a matrix $M_{t,n} \in \Q^{n\times n}$, which can be deterministically constructed in time $n^{O(t)}$, such that every entry of $M_{t,n}$ is an integer of bit complexity at most $\exp(O(t \log n))$, and it holds that
\[
\Sigma_{t}(M_{t,n}) \geq 2^{\left(\frac{n^2}{t}\right)^t}. 
\] 
\end{lemma}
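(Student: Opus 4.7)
The plan is to mimic the construction from \autoref{lem:hard-over-finite}, replacing the substitution $y = \alpha$ (for a primitive element $\alpha$ of a field extension) with the substitution $y = 2$ over the integers. The algebraic independence of powers of $\alpha$ that was used there is now replaced by the combinatorial independence of distinct powers of $2$ under subset summation, i.e., by uniqueness of binary representations.

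Concretely, I would start with the univariate matrix $G_{t,n} \in \F[y]^{n \times n}$ from \autoref{con:univariate hard matrix}, whose entries are $(G_{t,n})_{i,j} = y^{e_{i,j}}$ for a $t$-wise Sidon set $S = \{e_{i,j}\}$ provided by \autoref{lem:easy-sidon}, and then define $M_{t,n}$ by substituting $y = 2$. Each entry becomes a positive integer $2^{e_{i,j}}$. Since $\max_{i,j} e_{i,j} \le n^{O(t)}$ by \autoref{lem:easy-sidon}, the bit complexity of every entry is at most $n^{O(t)} = \exp(O(t \log n))$, and the running time of the construction is the same $n^{O(t)}$ needed to produce $S$.

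The main step is the lower bound on $\Sigma_t(M_{t,n})$. Every element of $\Pi_t(M_{t,n})$ has the form $2^{\sigma_T}$, where $\sigma_T = \sum_{(i,j) \in T} e_{i,j}$ and $T$ ranges over $t$-subsets of $[n] \times [n]$. The $t$-wise Sidon property guarantees that distinct $T$'s produce distinct exponents $\sigma_T$, so $\Pi_t(M_{t,n})$ is a set of $\binom{n^2}{t}$ distinct powers of $2$. I would then invoke the uniqueness of binary representations: any two distinct subsets of distinct powers of $2$ have distinct sums, because such a sum is precisely the natural number whose binary expansion has $1$'s exactly in the positions corresponding to the chosen exponents. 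Thus the $2^{\binom{n^2}{t}}$ subsets of $\Pi_t(M_{t,n})$ produce $2^{\binom{n^2}{t}}$ distinct sums, giving
\[
\Sigma_t(M_{t,n}) \;\ge\; 2^{\binom{n^2}{t}} \;\ge\; 2^{(n^2/t)^t}.
\]

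I do not foresee any significant obstacle. The only subtlety worth double-checking is the accounting of bit complexity, but since exponents in $S$ are bounded by $n^{O(t)}$, each entry $2^{e_{i,j}}$ takes at most $n^{O(t)} = \exp(O(t \log n))$ bits, comfortably within the required bound.
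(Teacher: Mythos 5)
Your proposal is correct and follows essentially the same route as the paper: substitute $y=2$ into $G_{t,n}$, use the $t$-wise Sidon property to get $\binom{n^2}{t}$ distinct powers of $2$ in $\Pi_t(M_{t,n})$, and conclude via uniqueness of binary representations that all subset sums are distinct. The only difference is that you spell out the binary-representation step that the paper leaves implicit.
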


\begin{proof}
Let $G_{t,n} \in \F[y]^{n \times n}$ be as in \autoref{con:univariate hard matrix}. Define $M_{t,n} \in \Q^{n \times n}$ as
\[
(M_{t,n})_{a,b} = (G_{t,n})_{a,b} (2),
\]
that is, $(M_{t,n}){a,b}$ is simply the polynomial $(G_{t,n})_{a,b}(y)$ evaluated at $y=2$.

As in the proof of \autoref{lem:univariate hard properties}, each element in $\Pi_t(M_{t,n})$ is now a distinct power of 2, which implies that $\Sigma_t(M_{t,n}) = 2^{\binom{n^2}{t}}$.

The statement on the running time follows directly from \autoref{lem:univariate hard properties}.
\end{proof}

The analog of \autoref{thm:depth-d-finite} for $\C$ is given below.

\begin{theorem}~\label{thm:depth-d-complex}
There exists a family of matrices $\{A_n\}_{n \in \N}$ over $\Q$ which can be constructed in time $n^{O(n^{1-1/2d})}$ such that every depth-$d$ linear circuit $\C$ computing $A_n$ has size at least $\Omega(n^{1 + 1/2d})$. Moreover, the entries of $A_n$ are positive integers of bit complexity at most $\exp(O(n^{1-1/2d}\log n))$. 
\end{theorem}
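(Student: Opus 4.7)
The plan is to follow the same structure as the proof of \autoref{thm:depth-d-finite}, swapping the Shoup--Smolensky dimension $\Gamma_{t,\F_p}$ for the summation measure $\Sigma_t$, and swapping \autoref{lem:hard-over-finite} for \autoref{lem:hard-over-C}. I would set $t = n^{1-1/2d}$ and invoke \autoref{lem:hard-over-C} to produce the matrix $A_n \in \Q^{n \times n}$ in deterministic time $n^{O(t)} = n^{O(n^{1-1/2d})}$ whose entries are positive integers of bit complexity $\exp(O(t\log n)) = \exp(O(n^{1-1/2d}\log n))$ and which satisfies $\Sigma_t(A_n) \geq 2^{(n^2/t)^t}$. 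These bounds already match the claimed construction time and entry size directly.

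For the lower bound, I would assume toward contradiction that some depth-$d$ linear circuit over $\C$ computes $A_n\vecx$ with total size $s \leq c\cdot n^{1+1/2d}$ for a small constant $c = c(d)$ to be chosen. Such a circuit yields a factorization $A_n = P_1 P_2 \cdots P_d$ with $P_i \in \C^{n_i \times m_i}$ and $\sum_{i} \spars{P_i} \leq s$; dropping identically zero rows and columns, we may assume $n_i, m_i \leq n^2$, and for $n$ large we have $s \geq dt$. Applying \autoref{lem:ss-up-sigma} (in the extended form discussed below) and combining with the lower bound on $\Sigma_t(A_n)$ gives
\[
2^{(n^2/t)^t} \;\leq\; \Sigma_t(A_n) \;\leq\; 2^{2n^3 \cdot \inparen{e^d (2s/dt)^d}^t}.
\]
Substituting $t = n^{1-1/2d}$ and $s \leq c\cdot n^{1+1/2d}$, the exponent on the left becomes $n^{(1+1/2d)\,t}$, while the exponent on the right is at most $2n^3 \cdot (e^d(2c/d)^d\cdot n)^t = 2n^3 \cdot C^t n^t$ for a constant $C = C(c,d)$. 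Comparing exponents, the surplus factor $n^{t/(2d)}$ on the left dominates any constant-to-the-$t$ factor on the right once $n$ is large, yielding a contradiction and thus the claimed $\Omega(n^{1+1/2d})$ bound. This is precisely the same parameter computation that drives the proof of \autoref{thm:depth-d-finite}.

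The step I would verify most carefully, and which I expect to be the main (though ultimately minor) obstacle, is justifying that \autoref{lem:ss-up-sigma} applies when the intermediate matrices $P_i$ lie in $\C^{n_i \times m_i}$ rather than $\Q^{n_i \times m_i}$, since the target matrix $A_n$ is over $\Q$ but the circuit is over $\C$. Inspecting the proof of \autoref{lem:ss-up-sigma}, the coefficients $c_\alpha$ in \eqref{eq:monomials-in-product} arise from purely combinatorial counting of the number of ways a fixed formal monomial in the entries of $P_1,\dots,P_d$ is produced when one expands a $t$-wise product of entries of $A_n$. Thus $c_\alpha$ is a non-negative integer bounded by $s^{dt}$ regardless of the ground field. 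Consequently every element of $\Sigma_t(A_n) \subseteq \Q$ arises as an evaluation of some formal $\Z_{\geq 0}$-linear combination of the $\binom{s+dt}{dt}$ formal monomials with integer coefficients bounded by $|\Pi_t(A_n)|\cdot s^{dt} \leq 2^{2n^3}$; the count of such formal combinations, which upper bounds $\Sigma_t(A_n)$, is then at most $(2^{2n^3+1})^{\binom{s+dt}{dt}}$, giving exactly the bound stated in \autoref{lem:ss-up-sigma}. Once this field-extension check is in place, the rest of the argument is a direct parameter calculation, parallel to the finite-field case.
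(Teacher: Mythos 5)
Your proposal matches the paper's proof essentially verbatim: the same choice $t = n^{1-1/2d}$, the same invocation of \autoref{lem:hard-over-C} for the construction, running time, and bit-complexity bounds, the same application of \autoref{lem:ss-up-sigma} to the factorization $A_n = P_1\cdots P_d$ with $n_i, m_i \le n^2$, and the same comparison of exponents to derive the contradiction. Your additional check that \autoref{lem:ss-up-sigma} remains valid when the factors $P_i$ lie over $\C$ rather than $\Q$ is correct --- the coefficients $c_\alpha$ are purely combinatorial and field-independent --- and is a point the paper's own proof silently elides.
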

\begin{proof}
Let $s = n^{1+1/2d}/2$ and $t = n^{1 -1/2d}$ and let $A_n = M_{t,n}$, where $M_{t,n}$ is as in \autoref{lem:hard-over-C}. A depth-$d$ circuit for $M_n$ implies a factorization $M_n = \prod_{i =1}^d P_i$, with $P_i \in \C^{n_i \times m_i}$, such that $\sum_{i = 1}^d\spars{P_i} \le s$. Observe that since zero columns of $P$ or zero rows of $Q$ can be omitted without affecting the product, we may assume $n_i, m_i \le n^2$, as otherwise the lower bound trivially holds.
By \autoref{lem:ss-up-sigma} and \autoref{lem:hard-over-C}, this implies that
\[
(n^2/t)^t \leq \log \Sigma_{t} (A_n) \le 2n^3 \cdot \inparen{e^d (2s/t)^d}^t.
\]
If $s \leq   n^{1+1/2d}/2$, we have, 
\[
\inparen{e^d (2s/dt)^d}^t \leq (O(e/d))^{dt} \cdot n^t\, . 
\]
We also have 
\[
\left(\frac{n^2}{t}\right)^t \geq \left(n^{1 + 1/2d} \right)^t \, .
\]
For any constant $d$, these estimates contradict the inequality above, thus implying a lower bound of $\Omega(n^{1+1/2d})$ on $s$. 

The statement on the running time for constructing $A_n$ follows again from \autoref{lem:hard-over-C}.
\end{proof}

\subsection{Lower bounds for depth-$2$ linear circuits}
\label{subsec:depth-2-direct-sum}
The lower bounds of \autoref{thm:depth-d-complex} and \autoref{thm:depth-d-finite} apply to any constant depth. However, here we briefly remark that in the special case of $d = 2$ there is in fact a much simpler construction. As discussed in the introduction, for depth-$2$ linear circuits, the best lower bounds currently known is a lower bound of $\Omega\left(n\frac{\log^2n}{\log\log n}\right)$ based on the study of super-concentrator graphs in the work of Radhakrishnan and Ta-Shma~\cite{RTS00}. We now discuss  two constructions of matrices in quasi-polynomial time which improve upon this bound. More formally, we prove the following theorem. 
\begin{theorem}\label{thm:depth-2-quasipoly}
Let $c$ be any positive constant. Then, there is a family $\{A_n\}_{n \in \N}$ of $n \times n$ matrices with entries in $\N$ of bit complexity at most $\exp(O(\log^{2c + 1} n))$ such that $A_n$ can be constructed in time $\exp(O(\log^{2c + 1} n))$ and any depth-$2$ linear circuit over $\C$ computing $A_n$ has size at least $\Omega(n\log^c n)$. 
\end{theorem}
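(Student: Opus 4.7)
The plan, foreshadowed by the surrounding text, is a direct-sum amplification: first build a small ``hard'' matrix $H$ in time exponential in its side length, and then tile $n/m$ copies of $H$ along the diagonal to lift hardness from $H$ to $A_n$. Concretely, I would take $m = \Theta(\log^c n)$ and invoke \autoref{lem:hard-over-C} with side length $m$ and order $t = \lfloor m^2/4 \rfloor$, producing a matrix $H \in \N^{m \times m}$ constructible in time $m^{O(t)} = \exp(O(m^2 \log m))$ whose entries are positive integers of bit complexity $\exp(O(m^2 \log m))$, and which satisfies $\Sigma_t(H) \ge 2^{(m^2/t)^t}$. With $m = \Theta(\log^c n)$, both the construction time and the bit complexity become $\exp(O(\log^{2c+1} n))$.

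To lower bound the depth-$2$ complexity $L_2(H)$ of $H$ over $\C$, I would combine this lower bound on $\Sigma_t$ with the $d=2$ case of \autoref{lem:ss-up-sigma}. If $L_2(H) \le s$ and the hypothesis $s \ge 2t'$ holds, the lemma gives $(m^2/t')^{t'} \le 2 m^3 (e^2 (s/t')^2)^{t'}$, which can be rearranged to $(m^2 t'/(e^2 s^2))^{t'} \le 2 m^3$. Choosing $t' = \min(\lfloor s/2 \rfloor, \lfloor m^2/4 \rfloor)$ so that the hypothesis $s \ge 2t'$ is tight, whenever $s$ is substantially smaller than $m^2$ the base $m^2 t'/(e^2 s^2)$ exceeds a constant greater than $1$, so the left-hand side grows exponentially in $t'$ while the right-hand side is only polynomially bounded in $m$, giving a contradiction for large $m$. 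A short calculation shows this forces $L_2(H) = \Omega(m^2)$; the truncated regime $s > m^2/2$ only strengthens the conclusion.

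The amplification step is the following direct-sum lemma for depth-$2$ linear circuits, which I believe is the main new ingredient: for any block-diagonal matrix $A = \mathrm{diag}(H_1, \ldots, H_k)$ with $H_i \in \F^{m \times m}$, we have $L_2(A) \ge \sum_{i=1}^k L_2(H_i)$. Given a depth-$2$ circuit for $A$ of size $s$ with input blocks $X_1, \ldots, X_k$ and output blocks $Y_1, \ldots, Y_k$ of size $m$, I would form for each $i$ a restricted circuit that keeps all middle vertices but retains only bottom-middle edges whose bottom endpoint lies in $X_i$ and only middle-top edges whose top endpoint lies in $Y_i$. Every edge of the original circuit has its non-middle endpoint in exactly one $X_i$ or $Y_i$, so each edge is retained in exactly one restricted circuit, giving $\sum_i s_i = s$. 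Zeroing out all inputs outside $X_i$ in the original circuit leaves the restricted middle-layer values and the outputs in $Y_i$ unchanged; the $i$-th restricted circuit therefore computes the linear map $x \mapsto H_i x$ on $Y_i$, so $s_i \ge L_2(H_i)$, and summing proves the lemma.

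Finally I would set $A_n = I_{n/m} \otimes H$ (assuming $m \mid n$; otherwise a trivial padding argument handles the general case). The combined bound is $L_2(A_n) \ge (n/m) \cdot L_2(H) = \Omega(nm) = \Omega(n \log^c n)$, while the entries and construction time of $A_n$ are inherited from $H$ and are both $\exp(O(\log^{2c+1} n))$ by the choice of $m$. Essentially the whole argument specializes machinery developed earlier to small $m$; the one genuinely new piece is the direct-sum lemma, and I expect this to be the main obstacle — fortunately, the layered structure of depth-$2$ circuits makes the edge-partitioning argument clean, though an analogous statement at depth $\ge 3$ would be considerably more delicate, which is part of why this direct-sum approach suffices only in the depth-$2$ setting.
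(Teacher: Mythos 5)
Your proposal follows essentially the same route as the paper's ``second proof'' of this theorem: construct a small hard matrix $H$ (side length $m=\Theta(\log^c n)$), show $L_2(H)=\Omega(m^2)$ via the $\Sigma_t$ measure, and then amplify by tiling $A_n = I_{n/m}\otimes H$. Your direct-sum lemma is exactly the paper's Lemma~\ref{lem:block diagonal hard matrix}, merely phrased in circuit language (partitioning edges of a layered depth-$2$ circuit by block) rather than in matrix language (partitioning the rows of $P$ and columns of $Q$ in a factorization $B=PQ$); the two are equivalent. You are also right that this edge-partitioning argument degrades badly past depth $2$, which the paper explicitly flags as the reason the direct-sum strategy is confined to $d=2$.

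Two small remarks on the base case. First, the paper obtains $H$ via the more elementary Observation~\ref{obs:trivial hard matrix} with the completely explicit matrix $(M_k)_{i,j}=2^{2^{(k+1)(i-1)+j}}$, whereas you re-invoke the general machinery of Lemma~\ref{lem:hard-over-C}; both work and give the claimed bounds, but the explicit form avoids the extra $\log m$ factor in the bit-complexity exponent. Second, and this is a genuine (if minor) gap in your write-up: you fix $t=\lfloor m^2/4\rfloor$ when invoking Lemma~\ref{lem:hard-over-C}, but in the lower-bound calculation you then use the inequality $\Sigma_{t'}(H)\ge 2^{(m^2/t')^{t'}}$ for a possibly smaller, adaptively chosen $t'=\min(\lfloor s/2\rfloor,\lfloor m^2/4\rfloor)$. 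Lemma~\ref{lem:hard-over-C} only guarantees this at the single order $t$ used in its construction. The statement you need is still true --- any $t$-wise Sidon set of size $m^2\ge t+t'$ is automatically $t'$-wise Sidon for $t'\le t$, by padding any would-be collision with $t-t'$ fresh common elements --- but you should say so, or simply apply Lemma~\ref{lem:ss-up-sigma} with the fixed $t=\lfloor m^2/4\rfloor$ as the paper does (noting that the hypothesis $s\ge dt$ is not actually needed, since for smaller $s$ one just bounds $\binom{s+dt}{dt}=\binom{s+dt}{s}\le(e(s+dt)/s)^s$ directly and the contradiction only becomes easier).
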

The first construction directly follows from~\autoref{lem:hard-over-C} when invoked with $t = 10\cdot  \log^{2c} n$. Once we have the matrices guaranteed by~\autoref{lem:hard-over-C}, we just follow  the proof of~\autoref{thm:depth-d-complex} as is by taking $d = 2$ and $t = 10\log^{2c} n$. We skip the technical details and now discuss the second construction, which is based on the following observation. 
\begin{observation}\label{obs:trivial hard matrix}
Let $\{A_n\}_{n \in \N}$ be a family of matrices where $(A_n)_{i, j} = 2^{2^{(n+1)(i-1) + j}}$. Then, any depth$-2$ linear circuit computing $A_n$ has size $\Omega(n^2)$. 
\end{observation}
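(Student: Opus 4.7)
The plan is to apply the $\Sigma_t$ upper bound of \autoref{lem:ss-up-sigma} with $d=2$ to the matrix $A_n$, combined with an elementary calculation that computes $\Sigma_t(A_n)$ exactly. Write $e_{i,j}:=(n+1)(i-1)+j$, so that $(A_n)_{i,j}=2^{2^{e_{i,j}}}$ and the $n^2$ exponents $e_{i,j}$ are pairwise distinct (the ``$+j$'' sweeps over $[n]$ without overlap thanks to the ``$(n+1)$'' step). In particular, the outer exponents $2^{e_{i,j}}$ are $n^2$ distinct powers of $2$.

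First I would compute $\Pi_t(A_n)$ and $\Sigma_t(A_n)$ on the nose by invoking the uniqueness of binary expansions twice. For any $t$-subset $T\subseteq[n]\times[n]$, the product $\prod_{(i,j)\in T}(A_n)_{i,j}=2^{\sum_{(i,j)\in T} 2^{e_{i,j}}}$, and the inner exponent $\sum_{(i,j)\in T}2^{e_{i,j}}$ (a sum of distinct powers of $2$) uniquely determines $T$. Hence $|\Pi_t(A_n)|=\binom{n^2}{t}$, and moreover $\Pi_t(A_n)$ is itself a set of distinct powers of $2$. Applying the same principle once more to sums of distinct elements of $\Pi_t(A_n)$ yields $\Sigma_t(A_n)=2^{\binom{n^2}{t}}$.

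Next I would combine this with \autoref{lem:ss-up-sigma}. Given any factorization $A_n=BC$ with $\spars{B}+\spars{C}\le s$, we may delete zero columns of $B$ and zero rows of $C$ to assume the middle dimension is at most $s$, and WLOG $s\le n^2$ (otherwise the lower bound is trivial). Choose $t:=\lfloor s/2\rfloor$, so that the hypotheses $s\ge dt$ and $t\le n^2/4$ of the lemma are satisfied. The lemma then gives
\[
2^{\binom{n^2}{t}}=\Sigma_t(A_n)\le 2^{2n^3\cdot(4e^2)^t},
\]
and combining with $\binom{n^2}{t}\ge (n^2/t)^t$ and taking logarithms forces $t\log(2n^2/s)\le O(\log n)+O(t)$, i.e.\ $\log(2n^2/s)=O(1)$, which is $s=\Omega(n^2)$.

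The argument is essentially routine once the two nested binary-expansion observations compute $\Sigma_t(A_n)$ exactly; the only delicate point is picking $t=\Theta(s)$ as a function of the (still-unknown) sparsity, so that the hypothesis $s\ge dt$ of \autoref{lem:ss-up-sigma} is used non-circularly while the resulting inequality is tight enough to rule out any $s=o(n^2)$.
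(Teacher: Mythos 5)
Your proof is correct and follows essentially the same approach as the paper: compute $\Sigma_t(A_n)$ exactly via nested binary expansions (each $t$-wise product of entries is a distinct power of $2$, and subset sums of those are again distinct), then combine with the $\Sigma_t$ upper bound of \autoref{lem:ss-up-sigma} to force $s=\Omega(n^2)$. One small slip: with $t=\lfloor s/2\rfloor$, satisfying the hypothesis $t\le n^2/4$ of \autoref{lem:ss-up-sigma} requires reducing to $s\le n^2/2$ rather than $s\le n^2$ (still clearly without loss of generality); also note the paper simply fixes $t=n^2/4$, whereas your adaptive choice $t=\lfloor s/2\rfloor$ saturates the hypothesis $s\ge dt$ of the lemma for every $s$ in the relevant range.
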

\begin{proof}
The key to the proof is to observe that for $t = n^2/4$, $\Sigma_t(A_n) \geq 2^{\binom{n^2}{n^2/4}} \geq 2^{2^{n^2/2}}$.  This follows from the fact that each $t$ wise product of the entries of $A_n$ is a power of $2$ where the exponent is a sum of  powers of $2$ and  for any two distinct degree $t$ multilinear monomials in the entries of $A_n$, the set of powers of $2$ that appear in the exponent are distinct. On the other hand, from~\autoref{lem:ss-up-sigma}, we know that if $A_n$ can be computed by a depth-$2$ linear circuit of size at most $s$, then 
\[
\Sigma_t(A_n) \leq 2^{2n^3\left(e^2(4s/n^2) \right)^{n^2/4}} \, .
\]
Now, for $s \leq n^2/100$, this upper bound is much smaller than the lower bound of $2^{2^{n^2/2}}$. Thus, any depth-$2$ linear circuit for $A_n$ over $\C$ has size at least $n^2/100$. 
\end{proof}

If we directly use this observation to construct hard matrices,  the bit complexity of the entries of $A_n$ (and hence the time complexity of constructing $A_n$) is as large as $2^{\Theta(n^2)}$. However, it also gives a much stronger (quadratic) lower bound on the depth-$2$ linear circuit size for $A_n$ than what is promised in~\autoref{thm:depth-2-quasipoly}. For our second construction for hard matrices for~\autoref{thm:depth-2-quasipoly}, we invoke~\autoref{obs:trivial hard matrix} to construct \emph{small} hard matrices (thus saving on the running time) and then construct a larger block diagonal matrix by taking a Kronecker product of this small hard matrix with a large  identity matrix.  The following lemma then guarantees a non-trivial lower bound on the size of any depth-$2$ linear circuit computing this larger block diagonal matrix. 

\begin{lemma}\label{lem:block diagonal hard matrix}
Let $A$ be an $k \times k$ matrix, such that any depth-$2$ linear circuit computing $A$ has size at least $s$. Let $B$ be an $mk \times mk$ matrix defined as  $B = \mathbf{I}_m \otimes A $, where $\otimes$ denotes the Kronecker product, and $\mathbf{I}_m$ the $m \times m$ identity matrix. Then, any depth-$2$ linear circuit computing $B$ has size at least $m\cdot s$. 
\end{lemma}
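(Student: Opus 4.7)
The plan is to use the block structure of $B = \mathbf{I}_m \otimes A$ directly, decomposing any depth-$2$ representation $B = PQ$ into $m$ separate depth-$2$ representations of $A$, one per diagonal block, each of which must individually satisfy the lower bound $s$.

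More concretely, first I would assume (without loss of generality, as in the introduction) that the depth-$2$ circuit computing $B$ is layered with some $r$ middle gates, so that $B = PQ$ for $P \in \F^{mk \times r}$ and $Q \in \F^{r \times mk}$, with total sparsity $\spars{P} + \spars{Q}$. Then I would partition the rows of $P$ into $m$ block-rows $P_1,\ldots,P_m$ (each $k \times r$) and the columns of $Q$ into $m$ block-columns $Q_1,\ldots,Q_m$ (each $r \times k$), where the $i$-th block corresponds to the $i$-th copy of $A$. Under this partitioning, the product $PQ$ has $(i,j)$-block $P_i Q_j$, and comparing with $B = \mathbf{I}_m \otimes A$ we get $P_i Q_j = 0$ for $i \neq j$ and, crucially, $P_i Q_i = A$ for every $i \in [m]$.

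The key step is the second of these identities: for each $i$, the factorization $A = P_i Q_i$ gives a layered depth-$2$ linear circuit computing $A$ (with $r$ middle gates), of size exactly $\spars{P_i} + \spars{Q_i}$. By the hypothesis on $A$, this quantity is at least $s$. Summing over $i$, and observing that $\spars{P} = \sum_{i=1}^m \spars{P_i}$ (since the block-rows partition the rows of $P$) and analogously $\spars{Q} = \sum_{i=1}^m \spars{Q_i}$, yields
\[
\spars{P} + \spars{Q} \;=\; \sum_{i=1}^m \bigl(\spars{P_i} + \spars{Q_i}\bigr) \;\ge\; m \cdot s,
\]
which is the claimed bound.

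There is no real obstacle here; the only minor care needed is the reduction to a layered circuit at the outset (which costs at most a factor of $2$ in size, but this factor is irrelevant since we are arguing about the underlying factorization $B=PQ$ directly), and the observation that the off-diagonal constraints $P_iQ_j=0$ for $i \neq j$ need not be used at all — the diagonal constraints alone suffice. The argument is essentially a direct-sum statement for depth-$2$ sparsity, made possible by the fact that the $m$ copies of $A$ act on disjoint input and output coordinates, so the sparsity budget for the middle layer splits cleanly across the blocks.
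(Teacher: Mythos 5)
Your proposal is correct and matches the paper's proof essentially verbatim: both partition the rows of $P$ and columns of $Q$ into $m$ blocks, observe that the diagonal blocks satisfy $P_iQ_i = A$, apply the hypothesis on $A$ to each block, and sum the sparsities. No further comment is needed.
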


\begin{proof}
A depth-$2$ linear circuit for $B$ gives a factorization of $B$ as $P\cdot Q$ for an $mk \times r$ matrix $P$ and an $r \times mk$ matrix $Q$ for some parameter $r$. We partition the rows of $P$ into $m$ contiguous blocks of size $k$ each, and let $P_i$ be the $k \times r$ submatrix which consists of the $i^{th}$ block (i.e. rows $(i-1)k + 1, (i-1)k + 2, \ldots, ik$ of $P$). Similarly, we partition  the columns of $Q$ into $m$ contiguous blocks of size $k$ each and let $Q_i$ be the $r \times k$ submatrix of $Q$ corresponding to the $i^{th}$ block. From the structure of $B$, it follows that  for every $i \in \{1, 2, \ldots, m\}$, $P_i\cdot Q_i = A$. From the lower bound on the size of any depth-$2$ linear circuit for $A$, we get that $\spars{P_i} + \spars{Q_i} \geq s$. Combining this lower bound for $i = 1, 2, \ldots, m$, we get $\spars{P} + \spars{Q} = \sum_{i = 1}^m\left(\spars{P_i} + \spars{Q_i}\right) \geq m\cdot s$.  
\end{proof}

We now note that~\autoref{obs:trivial hard matrix} and~\autoref{lem:block diagonal hard matrix} imply another family of matrices for which~\autoref{thm:depth-2-quasipoly} holds. 

\begin{proof}[Second proof of~\autoref{thm:depth-2-quasipoly}]
Pick $k = \Theta(\log^c n)$ such that $k$ divdes $n$, and let $M_k$ be the matrix defined as $(M_k)_{i, j} = 2^{2^{(k+1)(i-1) + j}}$. Let $A_n = \mathbf{I}_{n/k} \otimes M_k$.  Clearly, $A_n$ can be constructed in time $2^{O(k^2)}$. Moreover, from~\autoref{obs:trivial hard matrix} and~\autoref{lem:block diagonal hard matrix} it follows that any depth-$2$ linear circuit computing $A_n$ has size at least $\Omega(n/k \cdot k^2) = \Omega(n\log^ c n)$. 
\end{proof}

We note that even though the discussion in this section was confined to depth-$2$ linear circuit lower bounds over $\C$, similar ideas can be extended to other fields as well.

\subsubsection*{Extension of the direct sum based construction to arbitrary constant depth?}
In light of the above construction, it is a natural question is to ask if this idea also extends to the construction of hard matrices for depth-$d$ circuits for arbitrary constant $d$. While this is a reasonable conjecture, the easy proof of \autoref{lem:block diagonal hard matrix} breaks down even at depth $3$.

There are some variations of this idea, such us looking at $\mathbf{J}_{n/k} \otimes M_k$, where $\mathbf{J}$ is the all-1 matrix, which would work equally well to prove a lower bound for depth-$2$, but for which it is possible to prove an $O(n)$ upper bound in depth-$3$.

Furthermore, it can be seen that upper bounds on matrix multiplication in bounded depth will give small linear circuits for computing $\mathbf{I}_{n/k} \otimes M_k$. Thus, improved lower bounds using this construction, even for depth-$3$, will require proving new lower bounds for matrix multiplication in bounded depth (the current best lower bounds are again barely super-linear \cite{RS03}).

\section{Lower bounds via Hitting Sets}
\label{sec:lb-hitting-sets}

In this section, we prove lower bounds for several classes of depth 2 circuits using hitting sets for matrices. We first recall the definition.

\begin{definition}[Hitting set for matrices, \cite{FS12}]
Let $\calC \subseteq \F^{n \times n}$ be a set of matrices. A set $\calH \subseteq \F^n \times \F^n$ is said to be a \emph{hitting set} for $\calC$, if for every non-zero $C \in \calC$, there is a pair $(\veca, \vecb) \in \calH$ such that 
\[
\ip{\veca, M\cdot \vecb} = \sum_{i \in [n], j \in [m]} M_{i, j} a_i b_j \neq 0 . \qedhere
\]
\end{definition}

\subsection{Matrices with no sparse vectors in their kernel}\label{sec:hitting set const}
In this section, we recall some simple, deterministic and efficient constructions of matrices which do not have any sparse non-zero vector in their kernel. Such a construction forms the basic building block for building hard instances of matrices for various cases of the matrix factorization problem that we discuss in the rest of this paper. We start by describing such a construction over the field of real numbers. 

\subsubsection{Construction over $\R$}
The following is a weak form of a classical lemma of Descartes.

\begin{lemma}[Descartes' rule of signs]\label{lem:rule of signs}
Let $d_1< d_2 < \cdots < d_k$ be non-negative integers, and let $a_1, a_2, \ldots, a_k$ be arbitrary real numbers. Then, the number of distinct positive roots of the polynomial $\sum_{i = 1}^k a_i x^{d_i}$ is at most $k-1$.  
\end{lemma}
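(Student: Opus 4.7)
The plan is to prove the statement by induction on $k$, the number of terms in the polynomial. We may assume throughout that at least one $a_i$ is nonzero (otherwise the polynomial is identically zero, but for the statement to be nontrivial we implicitly consider nonzero polynomials; moreover we may drop any $a_i$ which equals zero and reduce $k$). The base case $k=1$ is immediate: a nonzero monomial $a_1 x^{d_1}$ has no positive roots, so the count $0$ is at most $k-1 = 0$.

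For the inductive step, assume the claim for polynomials with at most $k-1$ terms, and let $f(x) = \sum_{i=1}^k a_i x^{d_i}$ with $d_1 < \cdots < d_k$ and all $a_i \neq 0$. The key move is to reduce to a polynomial with one fewer term by combining two standard tricks. First, factor out the lowest power: the function $g(x) = x^{-d_1} f(x) = a_1 + \sum_{i=2}^k a_i x^{d_i - d_1}$ is defined and smooth on $(0,\infty)$ and has exactly the same positive roots as $f$. Next, differentiate: $g'(x) = \sum_{i=2}^k a_i (d_i - d_1) x^{d_i - d_1 - 1}$, which is an expression of the same form as in the lemma with exactly $k-1$ terms (since each $d_i - d_1 > 0$ and each $a_i(d_i - d_1) \neq 0$).

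The main step is then to apply Rolle's theorem on $(0, \infty)$: between any two distinct positive roots of $g$ there is a positive root of $g'$. Hence if $g$ has $r$ distinct positive roots, then $g'$ has at least $r-1$ distinct positive roots. By the inductive hypothesis applied to $g'$, we have $r - 1 \leq (k-1) - 1 = k - 2$, so $r \leq k-1$. Since positive roots of $g$ coincide with positive roots of $f$, this completes the induction.

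The proof is quite short and the only subtle point to verify carefully is that after factoring out $x^{d_1}$ the differentiated expression really has the form required for the induction hypothesis (same shape, strictly increasing exponents, nonzero coefficients), and that one is allowed to restrict attention to the domain $(0, \infty)$ when invoking Rolle. I don't anticipate an obstacle; this is simply the textbook argument adapted to the generalized setting of arbitrary real exponents replaced here by nonnegative integer exponents, which is strictly easier.
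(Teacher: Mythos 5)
Your proof is correct and is the standard textbook argument for this weak form of Descartes' rule of signs; the paper itself cites the lemma as classical and gives no proof, so there is nothing to compare against. The induction is sound: after factoring out $x^{d_1}$ and differentiating, the new expression has $k-1$ terms with strictly increasing nonnegative integer exponents $d_i - d_1 - 1$ and nonzero coefficients $a_i(d_i - d_1)$, so the inductive hypothesis applies, and Rolle's theorem on $(0,\infty)$ supplies the needed $r-1$ intermediate roots of $g'$.
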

\autoref{lem:rule of signs} immediately gives the following construction of a small set of vectors, such that not all of them can lie in the kernel of any matrix with at least one sparse row. 
\begin{lemma}\label{lem:hitting set for real sparse}
For $i \in [n]$, let $\vecv_i := \inparen{1, i, i^2, \ldots, i^{n-1}} \in \R^n$. Then, for every $1 \le s \le n$ and for every $m \times n$ matrix $B$ over real numbers that has a non-zero row with at most $s$ non-zero entries, there is an $i \in [s]$ such that $B\cdot \vecv_i \neq \mathbf{0}$.  
\end{lemma}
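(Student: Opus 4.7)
The plan is to reduce the statement to Descartes' rule of signs by focusing attention on the single sparse row guaranteed by the hypothesis. Let $\vecr \in \R^n$ be a non-zero row of $B$ with at most $s$ non-zero entries, say supported on the index set $T \subseteq \{0,1,\ldots,n-1\}$ with $|T| \le s$. For any $i \in [n]$, the corresponding coordinate of $B\vecv_i$ equals
\[
\ip{\vecr,\vecv_i} = \sum_{j \in T} r_j \cdot i^{j},
\]
which is the evaluation at the positive integer $i$ of the univariate real polynomial $p(x) := \sum_{j \in T} r_j x^{j}$. Since $\vecr$ is non-zero, $p$ is a non-zero polynomial whose support has size at most $s$.

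Next I would invoke \autoref{lem:rule of signs} on $p(x)$: it is a non-zero real polynomial with at most $s$ monomials, so it has at most $s-1$ distinct positive real roots. In particular, $p$ cannot vanish on the entire set $\{1,2,\ldots,s\}$, which consists of $s$ distinct positive reals. Hence there exists some $i \in [s]$ for which $p(i) \ne 0$, and for this $i$ the row of $B\vecv_i$ corresponding to $\vecr$ is non-zero, which immediately gives $B\vecv_i \ne \mathbf{0}$.

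There is no real obstacle here; the only subtle point is making sure that Descartes' rule (as stated in \autoref{lem:rule of signs}) is applied to a genuinely non-zero polynomial with at most $s$ terms, which follows from the assumption that the chosen row of $B$ is itself non-zero with sparsity at most $s$. The argument is entirely independent of $m$ and of the other rows of $B$, which is convenient for the later applications in \autoref{sec:lb-hitting-sets}.
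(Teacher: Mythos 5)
Your proof is correct and follows essentially the same route as the paper: isolate the sparse non-zero row, view the inner product with $\vecv_i$ as the evaluation of an at-most-$s$-sparse real polynomial at the positive integer $i$, and invoke Descartes' rule of signs to bound its positive roots by $s-1$, so it cannot vanish on all of $\{1,\ldots,s\}$. (Incidentally, the paper's own proof has a typo, writing ``$t-1$'' where it means ``$s-1$''; your version states the bound correctly.)
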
 
\begin{proof}
Let $(a_0, a_1, \ldots, a_{n-1}) \in \R^n$ be any non-zero vector with at most $s$ non zero entries. So, the polynomial $P(x) = \sum_{i = 0}^{n-1} a_i x^i$ has sparsity at most $s$. From~\autoref{lem:rule of signs}, it follows that $P$ has at most $t-1$ positive real roots. Therefore, there exists an $i \in [s]$ such that $i$ is \emph{not} a root of $P(x)$, i.e., $P(i) \neq 0$. The lemma now follows immediately  by taking $(a_0, a_1, \ldots, a_{n-1})$ to be any non-zero $s$-sparse row of $B$. 
\end{proof}
We remark that~\autoref{lem:hitting set for real sparse} also holds for matrices over $\C$ which have a sparse non-zero row for the choice of the vectors $v_i$ as above. This follows from the application of~\autoref{lem:rule of signs} separately for the real and complex parts of a sparse complex polynomial, both of which are individually sparse, with real coefficients and at least one of them is not identically zero. This observation extends  our results over $\R$ in~\autoref{sec:sym lb} to the field of complex numbers. 

\subsubsection{Construction over finite fields}
We now recall some basic properties of Reed-Solomon codes, and observe they can be used as well in lieu of the construction in \autoref{lem:hitting set for real sparse}. 

The proofs for these properties can be found in any standard reference on coding theory, e.g., Chapter 5 in~\cite{GRS-book}. 

\begin{definition}[Reed Solomon codes]\label{def:RS codes}
Let $\F_q =  \{\alpha_0, \alpha_2, \ldots, \alpha_{q-1}\}$ be the finite field with $q$ elements and let $k \in \{0, 1, \ldots, q-1\}$. The Reed-Solomon code of block length $q$ and dimension $k$ are defined as follows. 
\[
RS_{q}[q, k] = \{\left(P(\alpha_0), P(\alpha_1), \ldots, P(\alpha_{q-1})\right) : P(z) \in \F_q[z], \deg(P) \leq k-1 \}. \qedhere
\]
\end{definition}
\begin{lemma}\label{lem:RS props}
Let $\F_q$ be the finite field with $q$ elements and let $k \in \{0, 1, \ldots, q-1\}$. The linear space $RS_{q}[q, k]$ as in~\autoref{def:RS codes} satisfies the following properties.  
\begin{itemize}
\item Every non-zero vector in $RS_q[q, k]$ has at least $q-k +1$ non-zero coordinates. 
\item The dual of $RS_{q}[q, k]$ is the space of Reed Solomon codes of block length $q$ and dimension $q-k$. 
\end{itemize}
\end{lemma}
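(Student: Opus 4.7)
The plan is to treat the two bullets separately, both by elementary polynomial arguments.

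For the minimum-distance bullet, I would observe that a non-zero codeword in $RS_q[q,k]$ is of the form $(P(\alpha_0),\ldots,P(\alpha_{q-1}))$ for some non-zero $P \in \F_q[z]$ of degree at most $k-1$. Any such polynomial has at most $k-1$ roots in $\F_q$, so at most $k-1$ of the evaluations vanish, and hence at least $q-k+1$ of the coordinates are non-zero.

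For the duality bullet, my approach is first to match dimensions and then to verify containment in one direction. The same root-counting remark shows that the evaluation map sending a polynomial of degree $\le k-1$ to its evaluation vector on $\F_q$ is injective whenever $k \le q$, so $\dim RS_q[q,k] = k$ and therefore $\dim RS_q[q,k]^\perp = q-k = \dim RS_q[q,q-k]$. It thus suffices to prove the inclusion $RS_q[q,q-k] \subseteq RS_q[q,k]^\perp$.

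To establish the inclusion, I would take any $P \in \F_q[z]$ with $\deg P \le k-1$ and any $Q \in \F_q[z]$ with $\deg Q \le q-k-1$; their product $R := PQ$ has degree at most $q-2$, so the inner product
\[
\sum_{i=0}^{q-1} P(\alpha_i) Q(\alpha_i) \;=\; \sum_{\alpha \in \F_q} R(\alpha)
\]
is an $\F_q$-linear combination of the power sums $T_j := \sum_{\alpha \in \F_q} \alpha^j$ for $0 \le j \le q-2$. The proof then reduces to the classical identity that $T_j = 0$ throughout this range: for $j=0$ this is $q \cdot 1 = 0$ in characteristic $p$, while for $1 \le j \le q-2$ one uses that $\F_q^*$ is cyclic of order $q-1$ to write $T_j = \sum_{i=0}^{q-2} g^{ij}$ for a generator $g$, which is a geometric sum that vanishes unless $(q-1) \mid j$, a case excluded in our range (the contribution from $\alpha = 0$ is zero since $j \ge 1$). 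I expect no substantive obstacle in carrying this out; the only minor care needed is a clean separation of the $j=0$ case and the $\alpha = 0$ term.
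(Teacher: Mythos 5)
Your proof is correct: root counting gives the minimum distance, and the dimension count plus the vanishing of the power sums $\sum_{\alpha\in\F_q}\alpha^j$ for $0\le j\le q-2$ gives the duality statement, with the $j=0$ and $\alpha=0$ cases handled properly. The paper does not prove this lemma at all but defers it to a standard coding-theory reference, and your argument is exactly the standard one found there.
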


\begin{lemma}\label{lem:RS dual const}
Let $\F_q = \{\alpha_0, \alpha_2, \ldots, \alpha_{q-1}\}$ be the finite field with $q$ elements. For any $k\leq q-1$, let $G_k$ be the $q \times k$ matrix over $\F_q$ whose $i$-th row is $(1, \alpha_{i-1}, \alpha_{i-1}^2, \ldots, \alpha_{i -1}^{k-1})$. Then, every non-zero vector in $\F_q^q$ in the kernel of $(G_k)^{T}$ has at least $k + 1$ non-zero coordinates. 
\end{lemma}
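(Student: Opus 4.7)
The plan is to recognize $G_k$ as (the transpose of) a generator matrix of the Reed--Solomon code $RS_q[q,k]$, and then to read off the conclusion from \autoref{lem:RS props}. There is no real obstacle here; the statement is essentially a repackaging of the minimum distance of the dual code.

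First, I would identify the column span of $G_k$ inside $\F_q^q$ with $RS_q[q,k]$. Given any $\vecp = (p_0, p_1, \ldots, p_{k-1})^T \in \F_q^k$, the definition of $G_k$ gives
\[
(G_k \vecp)_i \;=\; \sum_{j=0}^{k-1} p_j \, \alpha_{i-1}^{\,j} \;=\; P(\alpha_{i-1}), \qquad \text{where } P(z) = \sum_{j=0}^{k-1} p_j z^j.
\]
Since $P$ ranges over all polynomials of degree at most $k-1$ as $\vecp$ ranges over $\F_q^k$, the column span of $G_k$ is exactly the evaluation code $RS_q[q,k]$ from \autoref{def:RS codes}.

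Next, I would translate the kernel condition into an orthogonality condition. A vector $\vecc \in \F_q^q$ satisfies $G_k^T \vecc = 0$ iff $\vecc$ is orthogonal (with respect to the standard bilinear form on $\F_q^q$) to every column of $G_k$, i.e.\ iff $\vecc$ lies in the orthogonal complement of the column span of $G_k$. Combining this with the previous step yields
\[
\ker(G_k^T) \;=\; RS_q[q,k]^{\perp}.
\]

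Finally, I would invoke \autoref{lem:RS props}. Its second item says that $RS_q[q,k]^{\perp} = RS_q[q, q-k]$, and its first item, applied to the code $RS_q[q, q-k]$ of dimension $q-k$, says that every non-zero vector in it has at least $q - (q-k) + 1 = k+1$ non-zero coordinates. Hence every non-zero $\vecc \in \ker(G_k^T)$ has at least $k+1$ non-zero coordinates, which is exactly the statement to prove.
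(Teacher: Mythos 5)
Your proof is correct and follows essentially the same route as the paper: identify the column span of $G_k$ with $RS_q[q,k]$, observe that $\ker(G_k^T)$ is the dual code, and then invoke both items of \autoref{lem:RS props} to bound the weight by $q-(q-k)+1 = k+1$. The only difference is that you spell out the polynomial-evaluation identification of the column span explicitly, whereas the paper asserts it directly.
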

 \begin{proof}
 Observe that $G_k$ is the precisely the generator matrix of Reed Solomon codes of block length $q$ and dimension $k$ over $\F_q$. In particular, the linear space $RS_q[q, k]$ as in~\autoref{lem:RS props} is spanned by the columns of $G_k$. Thus any vector $\vecw$ in the kernel of $(G_k)^{T}$ is in fact a codeword of  the dual of these codes, which as we know from Item 2 of~\autoref{lem:RS props}, is itself a Reed Solomon code of block length $q$ and dimension $q-k$. From the first item of~\autoref{lem:RS props}, it now follows that $\vecw$ has at least $k+1$ non-zero coordinates. 
 \end{proof}
The following lemma is an analog of~\autoref{lem:hitting set for real sparse}.  
 \begin{lemma}\label{lem:hitting set for finite fields sparse}
Let $\F_q = \{\alpha_0, \alpha_2, \ldots, \alpha_{q-1}\}$ be the finite field with $q$ elements, $s \in [q]$ be a parameter and let $\vecv_i$ be the $i$-th column of the matrix $G_{k}$ as in~\autoref{lem:RS dual const} for $k = s$.  

Then, for every $m \times n$ matrix $B$ over $\F_q$ that has a non-zero row with at most $s$ non zero entries, there is an $i \in [s]$ such that $B\cdot \vecv_i \neq 0$.  
\end{lemma}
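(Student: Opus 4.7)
The plan is to read this lemma as the direct finite-field analog of Lemma~\ref{lem:hitting set for real sparse}, with the dual Reed--Solomon property of Lemma~\ref{lem:RS dual const} playing the role of Descartes' rule of signs. I will implicitly assume $n\le q$ (so the rows of $B$ can be viewed as vectors in $\F_q^q$ after zero-padding if necessary), which is the natural regime since the construction of the $\vecv_i$ lives in $\F_q^q$.

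First I would fix a non-zero row $\vecb$ of $B$ that has at most $s$ non-zero entries. What we want to show is equivalent to the statement that there is some $i\in[s]$ with $\ip{\vecv_i,\vecb}\neq 0$, since a non-zero entry in row $\vecb$ of $B\cdot \vecv_i$ witnesses that $B\cdot \vecv_i\neq \mathbf{0}$. Stacking these inner products, this in turn is equivalent to saying that $(G_s)^T\vecb$ is a non-zero vector in $\F_q^s$, where $G_s$ is the $q\times s$ matrix whose $i$-th column is $\vecv_i$ (using $k=s$ in Lemma~\ref{lem:RS dual const}).

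The key step is then to invoke Lemma~\ref{lem:RS dual const} with $k=s$: every non-zero vector in $\ker((G_s)^T)$ has at least $s+1$ non-zero coordinates. Since $\vecb$ is non-zero and has at most $s$ non-zero coordinates, $\vecb$ cannot lie in $\ker((G_s)^T)$, and therefore $(G_s)^T\vecb \neq \mathbf{0}$. Unwinding the equivalence above produces the desired $i\in[s]$, completing the argument.

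There is essentially no real obstacle here; the whole content of the lemma was already packed into Lemma~\ref{lem:RS dual const}, which encodes the MDS property of Reed--Solomon codes in the form ``any vector with sparsity $\le k$ is detected by the evaluation functionals $\ip{\vecv_i,\cdot}$ for $i\in[k]$.'' The only mild subtlety to mention is the identification of $\vecb$ as an element of $\F_q^q$ (padding with zeros if $n<q$), so that the orthogonality condition and the sparsity count refer to the same ambient space; this does not change the number of non-zero coordinates of $\vecb$ and hence does not affect the application of the dual-distance bound.
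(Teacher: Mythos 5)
Your proposal is correct and follows exactly the paper's argument: a non-zero row with at most $s$ non-zero entries cannot lie in $\ker((G_s)^T)$ because, by Lemma~\ref{lem:RS dual const}, any non-zero vector in that kernel has at least $s+1$ non-zero coordinates. Your remark about zero-padding when $n<q$ is a sensible clarification of an implicit identification the paper leaves unstated.
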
 
\begin{proof}
The proof follows from the observation that any non-zero vector orthogonal to all the vectors $v_1, v_2, \ldots, v_s$ must be in the kernel of the matrix $G_{s}^T$ and hence by~\autoref{lem:RS dual const} must have at least $s+1$ non-zero entries. 
\end{proof}

\subsection{Lower bounds for symmetric circuits}\label{sec:sym lb}

We now prove our lower bounds for symmetric circuits. Recall that a symmetric circuit is a linear depth-2 circuit of the form $B^TB$.

\begin{theorem}\label{thm:sym factor lb}
There is an explicit family of positive semidefinite matrices $\{M_n\}$ such that every symmetric circuit computing $M_n$ has size at least $n^2/4$.
\end{theorem}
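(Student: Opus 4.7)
The plan is to take $M_n$ to be the Gram matrix of the first $n/2$ Descartes vectors from~\autoref{lem:hitting set for real sparse}. Explicitly, with $\vecv_i = (1, i, i^2, \ldots, i^{n-1})^T \in \R^n$, I would set
\[
M_n \;=\; \sum_{i=1}^{n/2} \vecv_i \vecv_i^T \;=\; V V^T, \qquad V = [\vecv_1 \mid \vecv_2 \mid \cdots \mid \vecv_{n/2}] \in \R^{n \times (n/2)}.
\]
Then $M_n$ is manifestly PSD, and $V$ has full column rank by the standard Vandermonde argument, so $M_n$ has rank exactly $n/2$. Given any factorization $M_n = B^T B$ with $B \in \R^{m \times n}$, the goal is to prove $\spars{B} \geq (n/2)(n/2+1) > n^2/4$ multiplicatively: (a) $B$ has at least $n/2$ non-zero rows, and (b) every non-zero row of $B$ has at least $n/2 + 1$ non-zero entries.

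For (a), I would invoke the standard fact that if $M = X X^T$ with $X$ of full column rank $r$, then any $B$ satisfying $B^T B = M$ has rank $r$ and its row space equals the column space of $X$. Applied here, every row of $B$ lies in $\Span(\vecv_1, \ldots, \vecv_{n/2})$, and at least $n/2$ of them are non-zero (in fact linearly independent). This step is routine linear algebra.

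Part (b) is the heart of the matter, and I expect it to be the main obstacle. I want to show that every non-zero $u = \sum_{j=1}^{n/2} c_j \vecv_j$ in this fixed $(n/2)$-dimensional subspace has $\spars{u} \geq n/2 + 1$. Its $i$-th coordinate is $u_i = \sum_{j=1}^{n/2} c_j\, j^{i-1}$. Suppose toward contradiction that the zero-set $I = \set{i \in [n] : u_i = 0}$ has $|I| \geq n/2$. Picking any $n/2$ indices $i_1 < \cdots < i_{n/2}$ from $I$ would yield a homogeneous system $Wc = 0$ with $W_{k,j} = j^{i_k - 1}$, i.e.\ an $(n/2)\times(n/2)$ generalized Vandermonde matrix with positive distinct bases $1,2,\ldots,n/2$ and distinct non-negative integer row-exponents. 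The analytic input I need is that such matrices are invertible over $\R$; this is classical and follows, for instance, from the real-exponent extension of Descartes' rule of signs (Laguerre's theorem), which asserts that the exponential sum $G(t) = \sum_j c_j\, j^t = \sum_j c_j e^{t \log j}$ has at most $(n/2) - 1$ real zeros whenever $c \neq 0$. The hypothetical $n/2$ zeros at $t = i_k - 1$ would therefore force $c = 0$, contradicting $u \neq 0$. Hence $|I| \leq (n/2) - 1$ and $\spars{u} \geq n/2 + 1$. Combining with (a) gives the claimed quadratic lower bound. This is morally dual to the way~\autoref{lem:hitting set for real sparse} is used elsewhere in the paper: there, Descartes bounds the number of roots of a sparse polynomial (so that no sparse vector can be orthogonal to all $\vecv_i$'s), while here we need an analogous bound on the number of zeros of an exponential sum (so that no vector in the span of a few $\vecv_j$'s can be sparse).
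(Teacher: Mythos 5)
Your proof is correct, but it is genuinely different from the one in the paper, essentially dual to it. The paper constructs a PSD matrix $M = \tilde{M}^T\tilde{M}$ whose \emph{kernel} contains $\vecv_1,\ldots,\vecv_{n/2}$; given $M = B^TB$ with $\spars{B} < n^2/4$, an averaging step produces a single nonzero row of $B$ with at most $n/2$ nonzero entries, \autoref{lem:hitting set for real sparse} (plain Descartes, integer exponents) then yields some $i$ with $B\vecv_i \neq 0$, and PSD-ness is used via $\vecv_i^T B^T B \vecv_i = \Vert B\vecv_i\Vert_2^2 \neq 0$ to contradict $\vecv_i^T M \vecv_i = 0$. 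You instead take $M = VV^T$ so that the $\vecv_i$ span the \emph{image}, force the row space of $B$ to equal $\Span(\vecv_1,\ldots,\vecv_{n/2})$, and show that this subspace contains no $(n/2)$-sparse nonzero vector, so that \emph{every} nonzero row of $B$ is dense (no averaging needed). The price is a stronger analytic input: your step (b) is not covered by \autoref{lem:rule of signs} as stated, since the function $t \mapsto \sum_j c_j j^{t}$ is an exponential sum with real frequencies $\log j$, so you need the real-exponent extension of Descartes' rule (Laguerre) or, equivalently, the nonsingularity of generalized Vandermonde matrices with distinct positive bases. Both are classical, so this is a legitimate (and correctly identified) appeal, but it is a different lemma than the one the paper actually proves and reuses for the invertible-circuit lower bound. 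What your route buys: a simpler, integer-entried hard matrix (no linear system to solve for $\tilde{M}$), a marginally better constant $(n/2)(n/2+1)$, and the stronger structural conclusion that all rows of any valid $B$ are dense; what the paper's route buys is that it runs entirely on the weaker Lemma 3.4, fits the hitting-set template of Section 3 that is reused for \autoref{thm:intro:invertible}, and transfers to finite fields by swapping in Reed--Solomon duals.
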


For the proof of this theorem, we give an efficient deterministic construction of a hitting set $\calH$ for the set of matrices which factor as $B^T\cdot B$ for $B$ of sparsity less than $n^2/4$, and
as outlined in \autoref{sec:techniques}, we construct a hard matrix $M = \tilde{M}^T\cdot \tilde{M}$ which is not hit by such a hitting set and has a high rank.

We start by describing the construction of $M$.

\begin{lemma}
\label{lem:hard-PSD-matrix}
Let  $\set{\vecv_i : i \in [n]}$ be the set of vectors defined in \autoref{lem:hitting set for real sparse}. 
There exists an explicit PSD matrix $M$ of rank $n/2$ such that $\vecv_i^T M\vecv_i = 0$ for $i \in [n/2]$.
\end{lemma}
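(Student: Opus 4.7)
The plan is to exploit the equivalence between \emph{$\vecv_i$ lies in the kernel of a PSD matrix $M$} and \emph{$\vecv_i^T M \vecv_i = 0$}: if $M = \tilde{M}^T \tilde{M}$ is PSD, then $\vecv_i^T M \vecv_i = \|\tilde{M}\vecv_i\|^2$, which vanishes if and only if $\tilde{M}\vecv_i = \mathbf{0}$, which in turn is equivalent to $M\vecv_i = \mathbf{0}$. So it suffices to construct an explicit PSD matrix $M$ of rank exactly $n/2$ whose kernel contains $\vecv_1,\ldots,\vecv_{n/2}$.

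The natural candidate is the orthogonal projector onto the orthogonal complement of $U := \Span(\vecv_1,\ldots,\vecv_{n/2})$. First I would check that $\dim U = n/2$: stacking $\vecv_1,\ldots,\vecv_{n/2}$ as columns of a matrix $V \in \R^{n \times (n/2)}$ gives (the transpose of) a $(n/2)\times n$ submatrix of a Vandermonde matrix on the distinct nodes $1,2,\ldots,n/2$, and since any $n/2$ columns of this Vandermonde submatrix form a nonsingular Vandermonde minor, $V$ has full column rank $n/2$. In particular, $V^T V$ is invertible, and I can explicitly set
\[
M \;:=\; I_n \;-\; V(V^T V)^{-1} V^T.
\]

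Now the three required properties are immediate. First, $M$ is symmetric and satisfies $M^2 = M$, so it is an orthogonal projector and hence PSD. Second, $\rank(M) = n - \rank(V(V^T V)^{-1}V^T) = n - n/2 = n/2$, since $V(V^T V)^{-1}V^T$ is itself a rank-$(n/2)$ projector (onto $U$). Third, for every $i \in [n/2]$, $V^T V (V^T V)^{-1} V^T \vecv_i = V^T \vecv_i$, so $M\vecv_i = \vecv_i - V(V^T V)^{-1}V^T \vecv_i = \vecv_i - \vecv_i = \mathbf{0}$, and therefore $\vecv_i^T M \vecv_i = 0$.

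There is really no substantive obstacle here; the only thing to verify with any care is the rank count, which rests on $V$ having full column rank, and the Vandermonde argument handles that. Explicitness is clear: $V$ is a Vandermonde matrix on prescribed nodes and $M$ is produced from $V$ by a single matrix inversion, all in deterministic polynomial time.
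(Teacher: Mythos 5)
Your proof is correct and follows essentially the same approach as the paper: both reduce the condition $\vecv_i^T M \vecv_i = 0$ for a PSD matrix to $M\vecv_i = \mathbf{0}$ and build an explicit rank-$n/2$ PSD matrix annihilating $\vecv_1,\dots,\vecv_{n/2}$ via the linear independence of Vandermonde rows; the paper realizes this by solving a linear system over the full basis $\vecv_1,\dots,\vecv_n$ and squaring, while you write down the orthogonal projector $I_n - V(V^TV)^{-1}V^T$, which is an equally valid (arguably cleaner) instantiation. One small nit: you only need \emph{one} nonsingular $n/2\times n/2$ minor of $V$ (the one from the first $n/2$ powers, a genuine Vandermonde matrix) to get full column rank --- the claim that \emph{any} $n/2$ columns give a nonsingular minor concerns generalized Vandermonde matrices and would need a separate argument, but it is not needed here.
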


\begin{proof}
We wish to find a matrix $\tilde{M}$ of high rank such that $\tilde{M} \vecv_i = 0$ for $i=1,\ldots,n/2$. This can be done by completing $\{\vecv_i : i \in \{1, 2, \ldots, n/2\}\}$ to a basis (in an arbitrary way) and requiring that the other $n/2$ basis elements are mapped to linearly independent vectors under $\tilde{M}$. Conveniently, the set $\set{\vecv_i : i \in [n]}$ is itself a basis for $\R^n$: the matrix $V$ whose rows are the $\vecv_i$'s is a Vandermonde matrix.

We now describe this in some more detail. For $i \in [n]$, let $\vece_i$ by the $i$-th elementary basis vector.
For a set of $n^2$ variables $Y = (y_{i, j})_{n \times n}$ consider the system of (non-homogeneous) linear equations on the variables $Y$ given by the $n$ constraints.
\begin{align*}
 Y\cdot \vecv_i &= 0 \quad\; \text{for } i \in \{1, 2, \ldots,  n/2\}   \\
  Y\cdot \vecv_i &= \vece_i \quad \text{for } i \in \{ n/2  + 1, \ldots, n\} \, .
\end{align*}

Since the vectors $\set{\vecv_i : i \in [n]}$ are linearly independent, this system has a solution, which can be found in polynomial time using basic linear algebra. More explicitly the $j$-th row of $Y$, $\vecy_j$, is given by the solution to the linear system $V\cdot (\vecy_j)^T = 0$ for $1 \le j \le n/2$ and $V \cdot (\vecy_j)^T = \vece_j$ for $n/2+1 \le j \le n$ where $V$ is the Vandermonde matrix whose rows are the $\vecv_i$'s. Let $\tilde{M}$ be the matrix whose rows are the solution to the system above. Also, note that the rank of $\tilde{M}$ is at least $n/2$, as linearly independent vectors $\vece_{n/2+1}, \vece_{n/2+2}, \ldots, \vece_{n}$ are in the image of the linear transformation given by $\tilde{M}$. 

Now let $M = (\tilde{M}^T) \cdot \tilde{M}$, so that indeed $M$ is a positive semi-definite matrix, and $\rank M = n/2$ as well. It immediately follows that
\[
\vecv_i^T M \vecv_i = (\vecv_i^T \tilde{M}^T)(\tilde{M}\vecv_i) = 0. \qedhere
\]
\end{proof}

We are now ready to prove \autoref{thm:sym factor lb}.

\begin{proof}[Proof of~\autoref{thm:sym factor lb}]
Let $M$ be the matrix from \autoref{lem:hard-PSD-matrix}. Let $B \in \R^{m \times n}$ be real matrix such that $\spars{B} < n^2/4$, and suppose towards contradiction that $M = B^TB$.

It follows that the rank of $B$ must be at least $n/2$. Thus, $B$ must have at least $n/2$ non-zero rows. Now, since the total sparsity of $B$ is at most $n^2/4-1$, there must be a non-zero row of $B$ with sparsity at most $(n^2/4-1)/(n/2) \leq  n/2$.  From~\autoref{lem:hitting set for real sparse}, it follows that there is an $i \in [n/2]$ such that $B\cdot \vecv_i$ is non-zero. Thus, for this index $i$, we have that
\[
\vecv_i^T (B^TB)\vecv_i = \left\Vert B\vecv_i \right\Vert_2^2 \neq 0,
\]
contradicting \autoref{lem:hard-PSD-matrix}.
\end{proof}

We remark that the proof of \autoref{thm:sym factor lb} goes through almost verbatim for symmetric circuits over $\C$ (recall that over $\C$ these are circuits of form $B^*B$, where $B^*$ is the conjugate transpose of $B$). 

\subsection{Lower bounds for invertible circuits}

Recall that an invertible circuit is a circuit of them form $BC$ where either $B$ or $C$ is invertible. In this section, we prove~\autoref{thm:intro:invertible}, which shows a quadratic lower bound for such circuits. For convenience, we restate the theorem.

\begin{theorem}
\label{thm:invertible-lb}
There exists an explicit family of $n \times n$ matrices $\set{A_n}$, over any field $\F$ such that $\F \ge \poly(n)$, such that every invertible circuit computing $A_n$ has size $n^2/4$.
\end{theorem}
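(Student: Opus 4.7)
The plan is to mimic the strategy from the proof of~\autoref{thm:sym factor lb} for symmetric circuits, but to handle the asymmetry of the invertible model by building a hard matrix $A_n$ whose \emph{left} and \emph{right} kernels \emph{both} contain a designated set of $n/2$ test vectors, and whose rank is exactly $n/2$. Once such an $A_n$ is in hand, any factorization $A_n=BC$ with one factor invertible can be attacked with the same row/column sparsity lemma used in the symmetric case: the test vectors must sit in the kernel of the other factor, which forces that factor to be dense.

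For the test vectors I would take $\vecv_1,\ldots,\vecv_{n/2}$ from~\autoref{lem:hitting set for real sparse} (over $\R$) or from~\autoref{lem:hitting set for finite fields sparse} (over a finite field, which is large enough by the assumption $|\F|\ge\poly(n)$). These vectors are linearly independent and can therefore be extended to a basis $\vecv_1,\ldots,\vecv_n$ of $\F^n$. Set $V_1^\perp = \{\vecw \in \F^n : \vecv_i^T \vecw = 0 \text{ for all } i \in [n/2]\}$, an $n/2$-dimensional subspace of $\F^n$. Define $A_n$ by prescribing its action on the basis: $A_n\vecv_i=\mathbf{0}$ for $i\in[n/2]$, and $A_n\vecv_{n/2+j}$ for $j\in[n/2]$ is chosen to be any fixed basis of $V_1^\perp$. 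Then $A_n$ has rank exactly $n/2$ and its right kernel equals $\Span(\vecv_1,\ldots,\vecv_{n/2})$; since every column of $A_n$ lies in $V_1^\perp$, we also have $\vecv_i^T A_n=0$ for $i\in[n/2]$, so the same vectors lie in the left kernel. The matrix $A_n$ is thus non-invertible (as required by the discussion following~\autoref{thm:intro:invertible}) and can be written down explicitly in polynomial time via elementary linear algebra.

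Given the construction, the lower bound is a two-case argument. Suppose $A_n=BC$ with $B,C\in\F^{n\times n}$ and $\spars{B}+\spars{C}<n^2/4$. If $B$ is invertible then $C\vecv_i = B^{-1}A_n\vecv_i=0$ for every $i\in[n/2]$. By~\autoref{lem:hitting set for real sparse} (resp.~\autoref{lem:hitting set for finite fields sparse}) every non-zero row of $C$ must have more than $n/2$ non-zero entries, for otherwise some $\vecv_i$ with $i\in[n/2]$ would witness non-vanishing on that row. Since $\rank(C)=\rank(A_n)=n/2$, $C$ has at least $n/2$ non-zero rows, and hence $\spars{C}>(n/2)\cdot(n/2)=n^2/4$, a contradiction. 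The case in which $C$ is invertible is handled symmetrically: the identity $\vecv_i^T A_n=0$ yields $\vecv_i^T B=0$ for $i\in[n/2]$, and the transposed version of the same lemma applied to the columns of $B$ forces $\spars{B}>n^2/4$.

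The main subtlety, in my view, is the construction of $A_n$ itself: the natural symmetric approach $A=\tilde M^T\tilde M$ used for the PSD model in~\autoref{lem:hard-PSD-matrix} need not preserve rank over finite fields (non-zero vectors can be self-orthogonal), so one cannot simply re-use that matrix. The basis-extension construction above gives up symmetry but delivers what the argument really needs, namely the same $n/2$ test vectors in both kernels; this is exactly what lets a single application of the hitting-set lemma cover both of the two cases in the invertible circuit model at once.
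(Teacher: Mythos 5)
Your proof is correct, and the high-level structure (kill the test vectors $\vecv_1,\ldots,\vecv_{n/2}$ from \autoref{lem:hitting set for real sparse} / \autoref{lem:hitting set for finite fields sparse} on both sides of the hard matrix, keep the rank at $n/2$, then use invertibility of one factor plus the hitting-set lemma to force the other factor to be dense) is the same as the paper's. The one place you genuinely deviate is the construction of the hard matrix itself: the paper reuses $M=\tilde M^T\tilde M$ from \autoref{lem:hard-PSD-matrix}, which conveniently has $M\vecv_i=0$ and $\vecv_i^T M = 0$ automatically by symmetry, whereas you build a (generally non-symmetric) $A_n$ directly by prescribing its action on a basis so that its image is $V_1^\perp$. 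Both constructions deliver exactly what the counting argument needs. Your stated worry about the paper's construction over finite fields—that $\tilde M^T\tilde M$ could drop rank because of self-orthogonal vectors—is a sound concern in general, but it happens not to bite for the specific $\tilde M$ of \autoref{lem:hard-PSD-matrix}: there the image of $\tilde M$ is $\Span(\vece_{n/2+1},\ldots,\vece_n)$, and since no nonzero vector in that span is isotropic against all of $\vece_{n/2+1},\ldots,\vece_n$, one still gets $\rank(\tilde M^T\tilde M)=\rank(\tilde M)=n/2$ over any field. That said, your direct construction avoids having to check this at all, and it makes the ``same test vectors in both kernels'' feature, which is what the two cases of the invertible model really use, completely explicit; that is a reasonable simplification. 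One small presentational point: when you conclude $\spars{C}>n^2/4$, you are using that each of the $\geq n/2$ nonzero rows has at least $n/2+1$ nonzero entries (the contrapositive of the lemma gives ``more than $n/2$'', not ``at least $n/2$''), which is exactly what is needed.
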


\begin{proof}
We give a proof over the field of real numbers  and highlight the ideas necessary to extend the argument to work over large enough finite fields. 

Fix $n$, and let $M = \tilde{M}^T \tilde{M}$ be the matrix constructed in \autoref{lem:hard-PSD-matrix}. Let $B$ and $C$  be $n \times n$ matrices over $\R$ such that $M=BC$. Suppose first that $B$ is invertible and $C$ has sparsity less than $n^2/4$.

Since $\rank(M) \ge n/2$, the same applies for $\rank(C)$, and hence the number of non-zero rows in $C$ must be at least $n/2$. Thus, $C$ must have a non-zero row with at most $(n^2/4-1)/(n/2) \leq n/2$ non-zero entries.  Along with \autoref{lem:hitting set for real sparse}, this implies that there is an $i \in [n/2]$ such that $C\cdot \vecv_i \neq \mathbf{0}$, where $\vecv_i$ is as in \autoref{lem:hitting set for real sparse}. Since $B$ is invertible, we get that $(B\cdot C \cdot \vecv_i)$ is a non-zero vector, so for some $j \in [n]$,
\[
\vece_j^T (BC)\vecv_i \neq 0.
\]
However, as in the proof of \autoref{lem:hard-PSD-matrix}
\[
\vece_j^T (M)\vecv_i = \vece_j^T \tilde{M}^T \tilde{M} \vecv_i = 0,
\]
since $\tilde{M} \vecv_i = 0$ for all $i \in [n/2]$.

The case that $B$ is sparse and $C$ is invertible is virtually the same, by considering $\vecv_i^T (BC) \vece_j$, and replacing the argument on the rows of $C$ by a similar one on the columns of $B$.

For the proof over finite fields, we replace every application of \autoref{lem:hitting set for real sparse} by \autoref{lem:hitting set for finite fields sparse}. Note that this requires the $n$-th matrix in the family to be defined over a field of size more than $n$. The rest of the argument essentially remains the same. 
\end{proof}

Over fixed finite fields (for example, $\F_2$), it is possible to prove an analog of \autoref{thm:invertible-lb}, with worse constants, by replacing the use of Reed-Solomon codes with any good explicit error-correcting code $C$ of dimension $\alpha n$ and distance $\delta n$ for some fixed constants $\alpha, \delta >0$. The proof proceeds as above by finding a matrix $\tilde{M}$ of rank $\alpha n$ such that $M\vecv= 0$ for every $\vecv \in C^{\perp}$.

\section{Open Problems}

An important problem that continues to remain  open  is to prove a lower bound of the form $\Omega(n^{1+\varepsilon})$ for some constant $\varepsilon>0$ for the depth-2 complexity of an explicit matrix. Such a lower bound would follow from an explicit hitting set of size at most $n^2 - 1$ for the class of polynomials of the form $\vecx^T BC \vecy$ such that $\spars{B}+\spars{C} \le n^{1+\varepsilon}$.

Another natural question here is be to understand if this PIT based approach can be used for explicit  constructions of rigid matrices, which improve the state of art. One concrete question in this direction would be to construct explicit hitting sets for the set of matrices which are \emph{not} $(r, s)$ rigid for $rs > \omega(n^2\log (n/r))$. Using the techniques in this paper, it is possible to construct hitting sets of size $O(rs)$ for matrices which are not $(r, s)$ rigid. But, this is non-trivial only when $rs \le cn^2$ for some constant $c<1$, which is a regime of parameters for which explicit construction of rigid matrices is already known. A sequence of recent results~\cite{AW17, DE17, DL19} showed that many natural candidates for rigid matrices that posses certain symmetries are in fact not as rigid as suspected. This approach might circumvent these obstacles by giving an explicit construction which is not ruled out by these results.

A lower bound of $s$ on the size of depth $d$ linear circuits computing the linear transformation $A\vecx$ implies a lower bound of $\Omega(s)$ for depth $\Omega(d)$ algebraic circuits computing the  degree-2 polynomial $\vecy^T A \vecx$ \cite{BS83, KalSin91} (so, we can convert lower bounds for circuits with $n$ outputs to lower bounds for circuits with 1 output). A notable open problem in algebraic complexity, which is very related to this work, is to prove any super-linear lower bound for algebraic circuits of depth $O(\log n)$ computing a polynomial with constant total degree. We refer to \cite{Raz10a} for a discussion on the importance of this problem.

\section*{Acknowledgements}
We thank Swastik Kopparty for an insightful discussion on  explicit construction of Sidon sets over finite fields. We also thank Rohit Gurjar, Nutan Limaye, Srikanth Srinivasan and Joel Tropp for helpful discussions. 

\bibliographystyle{customurlbst/alphaurlpp}
\bibliography{references}

\newcommand{\etalchar}[1]{$^{#1}$}
\begin{thebibliography}{DDPW83}

\bibitem[AGKS15]{AGKS15}
Manindra Agrawal, Rohit Gurjar, Arpita Korwar, and Nitin Saxena.
\newblock \href {http://dx.doi.org/10.1137/140975103} {Hitting-Sets for {ROABP}
  and Sum of Set-Multilinear Circuits}.
\newblock {\em {SIAM} J. Comput.}, 44(3):669--697, 2015.

\bibitem[Agr05]{A05a}
Manindra Agrawal.
\newblock \href {http://dx.doi.org/10.1007/11590156_6} {{P}roving {L}ower
  {B}ounds {V}ia {P}seudo-random {G}enerators}.
\newblock In {\em \FSTTCS{2005}}, pages 92--105, 2005.

\bibitem[AP94]{AP94}
Noga Alon and Pavel Pudl{\'{a}}k.
\newblock \href {http://dx.doi.org/10.1016/S0022-0000(05)80027-3}
  {Superconcentrators of Depths 2 and 3; Odd Levels Help (Rarely)}.
\newblock {\em J. Comput. Syst. Sci.}, 48(1):194--202, 1994.

\bibitem[AV08]{AV08}
Manindra Agrawal and V.~Vinay.
\newblock \href {http://dx.doi.org/10.1109/FOCS.2008.32} {Arithmetic Circuits:
  A Chasm at Depth Four}.
\newblock In {\em \FOCS{2008}}, pages 67--75, 2008.
\newblock Pre-print available at \parseECCC{TR08/062}.

\bibitem[AW17]{AW17}
Josh Alman and R.~Ryan Williams.
\newblock \href {http://dx.doi.org/10.1145/3055399.3055484} {Probabilistic rank
  and matrix rigidity}.
\newblock In {\em \STOC{2017}}, pages 641--652. {ACM}, 2017.

\bibitem[BCS97]{BCS97}
Peter B{\"u}rgisser, Michael Clausen, and Mohammad~A. Shokrollahi.
\newblock \href {http://dx.doi.org/10.1007/978-3-662-03338-8} {{\em {Algebraic
  Complexity Theory}}}, volume 315 of {\em Grundlehren der mathematischen
  Wissenschaften}.
\newblock Springer-Verlag, 1997.

\bibitem[BDT17]{BDT17}
Avraham Ben{-}Aroya, Dean Doron, and Amnon Ta{-}Shma.
\newblock \href {http://dx.doi.org/10.1145/3055399.3055423} {An efficient
  reduction from two-source to non-malleable extractors: achieving
  near-logarithmic min-entropy}.
\newblock In {\em \STOC{2017}}, pages 1185--1194. {ACM}, 2017.

\bibitem[BS83]{BS83}
Walter Baur and Volker Strassen.
\newblock \href {http://dx.doi.org/10.1016/0304-3975(83)90110-X} {The
  Complexity of Partial Derivatives}.
\newblock {\em Theoretical Computer Science}, 22:317--330, 1983.

\bibitem[Bsh14]{Bshouty}
Nader~H. Bshouty.
\newblock \href {http://dx.doi.org/10.1145/2554797.2554828} {Testers and their
  applications}.
\newblock In {\em Innovations in Theoretical Computer Science, ITCS'14, 2014},
  pages 327--352, 2014.

\bibitem[Cha01]{Chazelle2001}
Bernard Chazelle.
\newblock \href {https://www.cs.princeton.edu/~chazelle/pubs/book.pdf} {{\em
  The discrepancy method - randomness and complexity}}.
\newblock Cambridge University Press, 2001.

\bibitem[Coh17]{Cohen17}
Gil Cohen.
\newblock \href {http://dx.doi.org/10.1145/3055399.3055429} {Towards optimal
  two-source extractors and Ramsey graphs}.
\newblock In {\em \STOC{2017}}, pages 1157--1170. {ACM}, 2017.

\bibitem[CZ16]{CZ16}
Eshan Chattopadhyay and David Zuckerman.
\newblock \href {http://dx.doi.org/10.1145/2897518.2897528} {Explicit
  two-source extractors and resilient functions}.
\newblock In {\em \STOC{2016}}, pages 670--683. {ACM}, 2016.

\bibitem[DDPW83]{DDPW83}
Danny Dolev, Cynthia Dwork, Nicholas Pippenger, and Avi Wigderson.
\newblock \href {http://dx.doi.org/10.1145/800061.808731} {Superconcentrators,
  Generalizers and Generalized Connectors with Limited Depth (Preliminary
  Version)}.
\newblock In {\em \STOC{1983}}, pages 42--51. {ACM}, 1983.

\bibitem[DE17]{DE17}
Zeev Dvir and Benjamin Edelman.
\newblock \href {http://arxiv.org/abs/1708.01646} {Matrix rigidity and the
  Croot-Lev-Pach lemma}.
\newblock {\em CoRR}, abs/1708.01646, 2017.

\bibitem[DGW18]{DGW18}
Zeev Dvir, Alexander Golovnev, and Omri Weinstein.
\newblock \href {https://eccc.weizmann.ac.il/report/2018/188} {Static Data
  Structure Lower Bounds Imply Rigidity}.
\newblock {\em Electronic Colloquium on Computational Complexity {(ECCC)}},
  25:188, 2018.

\bibitem[DL19]{DL19}
Zeev Dvir and Allen Liu.
\newblock \href {http://arxiv.org/abs/1902.07334} {Fourier and Circulant
  Matrices are Not Rigid}.
\newblock {\em CoRR}, abs/1902.07334, 2019.

\bibitem[Fre82]{Fredman82}
Michael~L. Fredman.
\newblock \href {http://dx.doi.org/10.1145/322290.322305} {The Complexity of
  Maintaining an Array and Computing Its Partial Sums}.
\newblock {\em J. {ACM}}, 29(1):250--260, 1982.

\bibitem[Fri93]{F93}
Joel Friedman.
\newblock \href {http://dx.doi.org/10.1007/BF01303207} {A note on matrix
  rigidity}.
\newblock {\em Combinatorica}, 13(2):235--239, Jun 1993.

\bibitem[FS89]{FredmanSaks89}
Michael~L. Fredman and Michael~E. Saks.
\newblock \href {http://dx.doi.org/10.1145/73007.73040} {The Cell Probe
  Complexity of Dynamic Data Structures}.
\newblock In {\em \STOC{1989}}, pages 345--354. {ACM}, 1989.

\bibitem[FS12]{FS12}
Michael~A. Forbes and Amir Shpilka.
\newblock \href {http://dx.doi.org/10.1145/2213977.2213995} {On identity
  testing of tensors, low-rank recovery and compressed sensing}.
\newblock In {\em \STOC{2012}}, pages 163--172. {ACM}, 2012.

\bibitem[GHK{\etalchar{+}}13]{GHKPV13}
Anna G{\'{a}}l, Kristoffer~Arnsfelt Hansen, Michal Kouck{\'{y}}, Pavel
  Pudl{\'{a}}k, and Emanuele Viola.
\newblock \href {http://dx.doi.org/10.1109/TIT.2013.2270275} {Tight Bounds on
  Computing Error-Correcting Codes by Bounded-Depth Circuits With Arbitrary
  Gates}.
\newblock {\em {IEEE} Trans. Information Theory}, 59(10):6611--6627, 2013.

\bibitem[GKKS16]{GKKS16}
Ankit Gupta, Pritish Kamath, Neeraj Kayal, and Ramprasad Saptharishi.
\newblock \href {http://dx.doi.org/10.1137/140957123} {Arithmetic Circuits: {A}
  Chasm at Depth 3}.
\newblock {\em {SIAM} J. Comput.}, 45(3):1064--1079, 2016.

\bibitem[GM07]{GM07}
Anna G{\'{a}}l and Peter~Bro Miltersen.
\newblock \href {http://dx.doi.org/10.1016/j.tcs.2007.02.047} {The cell probe
  complexity of succinct data structures}.
\newblock {\em Theor. Comput. Sci.}, 379(3):405--417, 2007.

\bibitem[GRS18]{GRS-book}
Venkatesan Guruswami, Atri Rudra, and Madhu Sudan.
\newblock \href
  {https://cse.buffalo.edu/faculty/atri/courses/coding-theory/book/} {Essential
  Coding Theory}, 2018.

\bibitem[HS80]{HS80}
Joos Heintz and Claus-Peter Schnorr.
\newblock \href {http://dx.doi.org/10.1145/800141.804674} {{Testing Polynomials
  which Are Easy to Compute (Extended Abstract)}}.
\newblock In {\em \STOC{1980}}, pages 262--272, 1980.

\bibitem[JS13]{JS13}
Stasys Jukna and Igor Sergeev.
\newblock \href {http://dx.doi.org/10.1561/0400000063} {Complexity of Linear
  Boolean Operators}.
\newblock {\em Foundations and Trends in Theoretical Computer Science},
  9(1):1--123, 2013.

\bibitem[Koi12]{K12b}
Pascal Koiran.
\newblock \href {http://dx.doi.org/10.1016/j.tcs.2012.03.041} {Arithmetic
  Circuits: The Chasm at Depth Four Gets Wider}.
\newblock {\em Theoretical Computer Science}, 448:56--65, 2012.
\newblock Pre-print available at \href {http://arxiv.org/abs/1006.4700}
  {\path{arXiv:1006.4700}}.

\bibitem[KS91]{KalSin91}
Erich Kaltofen and Michael~F. Singer.
\newblock \href {https://users.cs.duke.edu/~elk27/bibliography/91/KaSi91.pdf}
  {Size efficient parallel algebraic circuits for partial derivatives}.
\newblock In {\em IV International Conference on Computer Algebra in Physical
  Research}, pages 133--145, 1991.

\bibitem[Lar12]{Larsen12}
Kasper~Green Larsen.
\newblock \href {http://dx.doi.org/10.1145/2213977.2213987} {The cell probe
  complexity of dynamic range counting}.
\newblock In {\em \STOC{2012}}, pages 85--94. {ACM}, 2012.

\bibitem[Lar14]{Larsen14}
Kasper~Green Larsen.
\newblock \href {http://dx.doi.org/10.1137/120865240} {On Range Searching in
  the Group Model and Combinatorial Discrepancy}.
\newblock {\em {SIAM} J. Comput.}, 43(2):673--686, 2014.

\bibitem[Li18]{Li18}
Xin Li.
\newblock \href {http://arxiv.org/abs/1804.04005} {Non-Malleable Extractors and
  Non-Malleable Codes: Partially Optimal Constructions}.
\newblock {\em CoRR}, abs/1804.04005, 2018.
\newblock Pre-print available at \href {http://arxiv.org/abs/1804.04005}
  {\path{arXiv:1804.04005}}.

\bibitem[Lok09]{Lokam09}
Satyanarayana~V. Lokam.
\newblock \href {http://dx.doi.org/10.1561/0400000011} {Complexity Lower Bounds
  using Linear Algebra}.
\newblock {\em Foundations and Trends in Theoretical Computer Science},
  4(1-2):1--155, 2009.

\bibitem[LS00]{LS00}
Daniel~D. Lee and H.~Sebastian Seung.
\newblock \href
  {http://papers.nips.cc/paper/1861-algorithms-for-non-negative-matrix-factorization}
  {Algorithms for Non-negative Matrix Factorization}.
\newblock In {\em Advances in Neural Information Processing Systems 13, Papers
  from Neural Information Processing Systems {(NIPS)} 2000}, pages 556--562.
  {MIT} Press, 2000.

\bibitem[LWY18]{LWY18}
Kasper~Green Larsen, Omri Weinstein, and Huacheng Yu.
\newblock \href {http://dx.doi.org/10.1145/3188745.3188790} {Crossing the
  logarithmic barrier for dynamic Boolean data structure lower bounds}.
\newblock In {\em \STOC{2018}}, pages 978--989. {ACM}, 2018.

\bibitem[MBPS09]{MBPS09}
Julien Mairal, Francis~R. Bach, Jean Ponce, and Guillermo Sapiro.
\newblock \href {http://dx.doi.org/10.1145/1553374.1553463} {Online dictionary
  learning for sparse coding}.
\newblock In {\em Proceedings of the 26th Annual International Conference on
  Machine Learning, {ICML} 2009}, volume 382 of {\em {ACM} International
  Conference Proceeding Series}, pages 689--696. {ACM}, 2009.

\bibitem[Mor73]{Morgenstern73}
Jacques Morgenstern.
\newblock \href {http://dx.doi.org/10.1145/321752.321761} {Note on a Lower
  Bound on the Linear Complexity of the Fast Fourier Transform}.
\newblock {\em J. {ACM}}, 20(2):305--306, 1973.

\bibitem[NP13]{NP13}
Behnam Neyshabur and Rina Panigrahy.
\newblock \href {http://arxiv.org/abs/1311.3315} {Sparse Matrix Factorization}.
\newblock {\em CoRR}, abs/1311.3315, 2013.

\bibitem[P{\v{a}}t07]{Patrascu07}
Mihai P{\v{a}}tra{\c{s}}cu.
\newblock \href {http://dx.doi.org/10.1145/1250790.1250797} {Lower bounds for
  2-dimensional range counting}.
\newblock In {\em \STOC{2007}}, pages 40--46. {ACM}, 2007.

\bibitem[PD06]{PD06}
Mihai P{\v{a}}tra{\c{s}}cu and Erik~D. Demaine.
\newblock \href {http://dx.doi.org/10.1137/S0097539705447256} {Logarithmic
  Lower Bounds in the Cell-Probe Model}.
\newblock {\em {SIAM} J. Comput.}, 35(4):932--963, 2006.

\bibitem[Pip77]{Pippenger77}
Nicholas Pippenger.
\newblock \href {http://dx.doi.org/10.1137/0206022} {Superconcentrators}.
\newblock {\em {SIAM} J. Comput.}, 6(2):298--304, 1977.

\bibitem[Pip82]{Pippenger82}
Nicholas Pippenger.
\newblock \href {http://dx.doi.org/10.1016/0022-0000(82)90056-3}
  {Superconcentrators of Depth 2}.
\newblock {\em J. Comput. Syst. Sci.}, 24(1):82--90, 1982.

\bibitem[Pud94]{Pudlak94}
Pavel Pudl{\'{a}}k.
\newblock \href {http://dx.doi.org/10.1007/BF01215351} {Communication in
  Bounded Depth Circuits}.
\newblock {\em Combinatorica}, 14(2):203--216, 1994.

\bibitem[Pud00]{Pudlak00}
Pavel Pudl{\'{a}}k.
\newblock \href {http://dx.doi.org/10.1016/S0020-0190(00)00058-2} {A note on
  the use of determinant for proving lower bounds on the size of linear
  circuits}.
\newblock {\em Inf. Process. Lett.}, 74(5-6):197--201, 2000.

\bibitem[Raz10]{Raz10a}
Ran Raz.
\newblock \href {http://dx.doi.org/10.4086/toc.2010.v006a007} {Elusive
  Functions and Lower Bounds for Arithmetic Circuits}.
\newblock {\em Theory of Computing}, 6(1):135--177, 2010.

\bibitem[RS03]{RS03}
Ran Raz and Amir Shpilka.
\newblock \href {http://dx.doi.org/10.1137/S009753970138462X} {Lower Bounds for
  Matrix Product in Bounded Depth Circuits with Arbitrary Gates}.
\newblock {\em {SIAM} J. Comput.}, 32(2):488--513, 2003.

\bibitem[RT00]{RTS00}
Jaikumar Radhakrishnan and Amnon Ta{-}Shma.
\newblock \href {http://dx.doi.org/10.1137/S0895480197329508} {Bounds for
  Dispersers, Extractors, and Depth-Two Superconcentrators}.
\newblock {\em {SIAM} J. Discrete Math.}, 13(1):2--24, 2000.

\bibitem[Sho90]{Shoup90}
Victor Shoup.
\newblock \href
  {https://www.ams.org/journals/mcom/1990-54-189/S0025-5718-1990-0993933-0/S0025-5718-1990-0993933-0.pdf}
  {New Algorithms for Finding Irreducible Polynomials over Finite Fields}.
\newblock {\em Mathematics of Computation}, 54:435--447, 1990.

\bibitem[SS96]{SS96}
Victor Shoup and Roman Smolensky.
\newblock \href {http://dx.doi.org/10.1007/BF01270384} {Lower bounds for
  polynomial evaluation and interpolation problems}.
\newblock {\em Computational Complexity}, 6(4):301--311, Dec 1996.

\bibitem[SSS97]{SSS97}
Mohammad~Amin Shokrollahi, Daniel~A. Spielman, and Volker Stemann.
\newblock \href {http://dx.doi.org/10.1016/S0020-0190(97)00190-7} {A Remark on
  Matrix Rigidity}.
\newblock {\em Inf. Process. Lett.}, 64(6):283--285, 1997.

\bibitem[Str73]{Strassen73}
Volker Strassen.
\newblock \href {http://dx.doi.org/10.1007/BF01436566} {Die
  Berechnungskomplexit\"{a}t Von Elementarsymmetrischen Funktionen Und Von
  Interpolationskoeffizienten}.
\newblock {\em Numerische Mathematik}, 20(3):238--251, June 1973.

\bibitem[SY10]{SY10}
Amir Shpilka and Amir Yehudayoff.
\newblock \href {http://dx.doi.org/http://dx.doi.org/10.1561/0400000039}
  {Arithmetic Circuits: A survey of recent results and open questions}.
\newblock {\em Foundations and Trends in Theoretical Computer Science},
  5:207--388, March 2010.

\bibitem[Tav15]{T15}
S{\'{e}}bastien Tavenas.
\newblock \href {http://dx.doi.org/10.1016/j.ic.2014.09.004} {Improved bounds
  for reduction to depth 4 and depth 3}.
\newblock {\em Inf. Comput.}, 240:2--11, 2015.
\newblock \pMFCS{2013}.

\bibitem[Val75]{Valiant75}
Leslie~G. Valiant.
\newblock \href {http://dx.doi.org/10.1145/800116.803752} {On Non-linear Lower
  Bounds in Computational Complexity}.
\newblock In {\em \STOC{1975}}, pages 45--53. ACM, 1975.

\bibitem[Val77]{Valiant77}
Leslie~G. Valiant.
\newblock \href {http://dx.doi.org/10.1007/3-540-08353-7\_135} {Graph-Theoretic
  Arguments in Low-Level Complexity}.
\newblock In {\em \MFCS{1977}}, volume~53 of {\em Lecture Notes in Computer
  Science}, pages 162--176. Springer, 1977.

\end{thebibliography}


\end{document}
